\theoremstyle{plain}
\newtheorem{lemma}{Lemma}[subsection]
\theoremstyle{definition}
\theoremstyle{remark}
\begin{document}

\articletype{RESEARCH ARTICLE} 

\title{A Game Theoretic Framework for Surplus Food Distribution in Smart Cities and Beyond}

\author{
\name{Surja Sanyal\textsuperscript{a}\thanks{CONTACT Surja Sanyal. Email: hi.surja06@gmail.com}, Vikash Kumar Singh\textsuperscript{b}\thanks{CONTACT Vikash Kumar Singh. Email: vikash.singh@vitap.ac.in}, Fatos Xhafa\textsuperscript{c}\thanks{CONTACT Fatos Xhafa. Email: fatos@cs.upc.edu}, Banhi Sanyal\textsuperscript{d}\thanks{CONTACT Banhi Sanyal. Email: banhisanyal9@gmail.com}, and Sajal Mukhopadhyay\textsuperscript{e}\thanks{CONTACT Sajal Mukhopadhyay. Email: sajal@cse.nitdgp.ac.in}}
\affil{\textsuperscript{a}National Institute of Technology, Durgapur, West Bengal, India; \textsuperscript{b}School of Computer Science and Engineering, VIT-AP University, Amaravati, Andhra Pradesh, India; \textsuperscript{c}Universitat Polit\`{e}cnica de Catalunya, Barcelona, Spain; \textsuperscript{d}National Institute of Technology, Rourkela, Odisha, India; \textsuperscript{e}National Institute of Technology, Durgapur, West Bengal, India}
}

\maketitle

\begin{abstract}
Food waste is a major challenge for the present world. It is the precursor to several socioeconomic problems that are plaguing the modern society. To counter the same and to, simultaneously, stand by the undernourished, surplus food redistribution has surfaced as a viable solution. Information and Communications Technology (ICT)-mediated food redistribution is a highly scalable approach and it percolates into the masses far better. Even if ICT is not brought into the picture, the presence of food surplus redistribution in developing countries like India is scarce and is limited to only a few of the major cities. The discussion of a surplus food redistribution framework under strategic settings is a less discussed topic around the globe. This paper aims at addressing a surplus food redistribution framework under strategic settings, thereby facilitating a smoother exchange of surplus food in the smart cities of developing countries, and beyond. As ICT is seamlessly available in smart cities, the paper aims to focus the framework in these cities. However, this can be extended beyond the smart cities to places with greater human involvement.
\end{abstract}

\begin{abbreviations}

\noindent CA-DTB: Chronological Acceptance using Double Tie Breaking;\\
CTFU: Classification and TriFurcation of Users;\\
FDRM-CA: Food Donor to Receiver Matching with Chronological Acceptance;\\
FSSAI: Food Safety and Standards Authority of India;\\
GPS: Global Positioning System;\\
ICT:  Information and Communications Technology;\\
IFSA: Indian Food Sharing Alliance;\\
NGO: Non Government Organizations;\\
NUIR: New User Interrupt Routine;\\
P2P: Peer-To-Peer;\\
UT: Union Territory

\end{abbreviations}

\begin{keywords}
Food surplus; Food redistribution; Food recovery; Food waste; Food sharing; Food waste management; Smart cities; ICT-mediation; Game theory; Double-sided market; Sustainability; Scalability
\end{keywords}

\section{Introduction}

For the last couple of years, food waste has been a major contributor towards several socioeconomic problems \citep{ganglbauer2012creating, harvey2020food}, including, but not limited to, global warming, greenhouse gas emissions, water wastage, soil degradation, farmer suicides, price fluctuations, black marketing, and hoarding. On a global scale, more than 30\% of all edible products ends up as waste \citep{shi2020improving}. Each year this wastage amounts to billions of tons of food, out of which near 60\% is avoidable \citep{zhong2017food}. One of the solutions that can deal with this underrated, yet malicious, problem of food wastage is surplus food redistribution. Distributing food from the surplus to the deficit hits the generation of food waste at the surplus end and hunger at the deficit end. While this may not be a solution to the whole situation, it targets the 60\% avoidable food waste into a win-win scenario for the society as well as the economy of the land. Information and Communications Technology (ICT) has taken this a step further \citep{ciaghi2016beyond} and made the redistribution activity smoother by connecting the donors with the receivers, and the available volunteers, by minimizing human interference as far as possible. The use of ICT brings the very desirable properties of sustainability and, more importantly, scalability to the table. ICT-mediation in food redistribution is the buzz of the day \citep{weymes2018disruptive} and is gaining fast popularity around the world due to its numerous upsides. In the smart cities of developing countries like India, however, a lot remains to be done as the presence of food redistribution movements mediated by ICT is scarce and only limited to a few major cities. The discussion of a surplus food redistribution framework under strategic settings is a less discussed topic around the globe. This paper aims at addressing a surplus food redistribution framework under strategic settings, thereby facilitating a smoother exchange of surplus food in the smart cities of developing countries, and beyond. As ICT is seamlessly available in smart cities, the paper aims to focus the framework in these cities. However, this can be extended beyond the smart cities to places with greater human involvement.

\hyperref[Fig-5]{\textbf{Figure 1}} presents an overview of the model. The donation that the donors make is transported by the volunteers, who are available on the fly or as planned, to the receivers for the matching assignment made by the algorithm, as shown by the \textcolor{Brown}{\textbf{brown arrows}}. The movement of information used by the app for match generation has been denoted by \textcolor{DarkBlue}{\textbf{blue arrows}}, and the generated match information flow has been represented with \textcolor{DarkGreen}{\textbf{green arrows}}. Notice the \textcolor{DarkMagenta}{\textbf{magenta arrows}} that indicate that donors and receivers can also volunteer for transportation, while being a donor or a receiver, simultaneously.

The rest of the paper is organized into the following sections. Section \ref{Sec-2} presents the related work in this field; Section \ref{Sec-3} presents the method followed in presenting the solution to our problem. The system model has been discussed in Section \ref{Sec-4}. The mechanism used is discussed in Section \ref{Sec-5} followed by its analysis and simulation in Section \ref{Analysis}. Conclusion and future work is presented in Section \ref{Sec-7}.

\begin{figure}
    \centering
    \includegraphics[width=\textwidth]{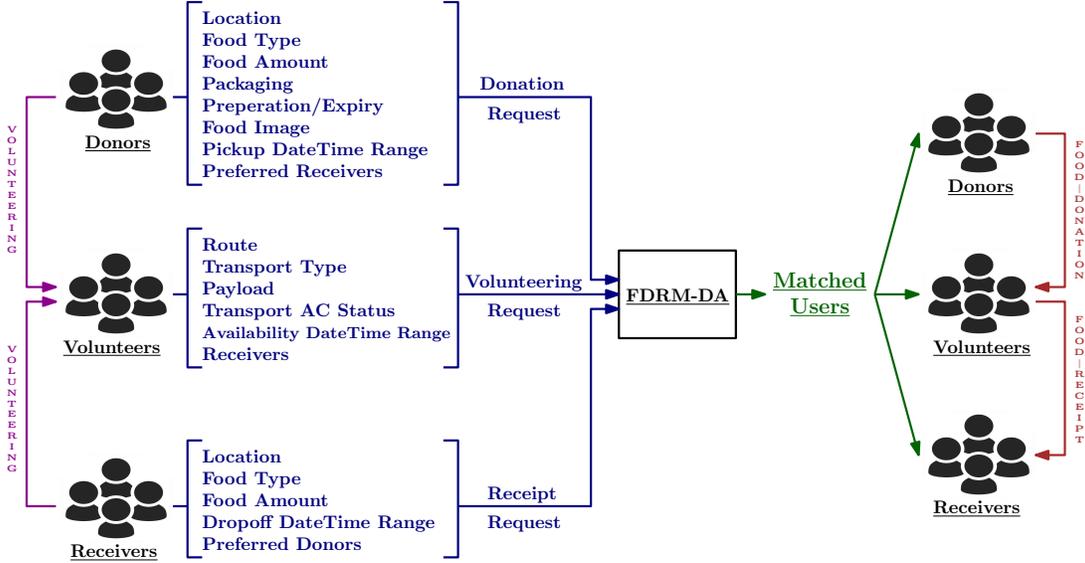}
    \caption{Overview of the Model} \label{Fig-5}
\end{figure}

\section{Related Work} \label{Sec-2}

The foundation of our work already exists in the form of establishing food redistribution as one of the leading solutions of dealing food wastage, and ICT-mediation as a great add-on to boost the surplus movement. However, work on the algorithms used to match the donors with the receivers remains elusive and is the main agenda of our paper.

\subsection{Literature Review}

A study \citep{spring2020capturing} has identified the rapid global growth of surplus food redistribution initiatives relying on volunteer activities mediated by ICT-based platforms. The involvements and contributions of cultures, organisational facts and internal experiences has also been explored by another study \citep{rut2020participating}. Reduction of food waste production as an approach towards mitigating climate change \citep{nikravech2020limiting} attempted by grassroots initiatives aim to prevent avoidable food waste and redistribute surplus food for consumption. These low budget initiatives rely heavily on volunteers to curb their costs. Another study \citep{bergstrom2020sustainability} suggests that environmental sustainability is best supported by an approach called the \textit{Social Supermarket} in which a body for handling the surplus of a city works with several food distributors gets its surplus either directly from the donors or through a government charitable body and redistributes it to the needy. A unique study \citep{koivunen2020increasing} reports an attempt at the tricky task of reducing consumer food waste at cafeterias using the combo of ICTs and IoTs. Lunch lines containing sensors that track food wastage which is later showed in each consumer's mobile via an app revealed an average 3\% food wasted by the customers. Another study \citep{spicer2020surplus} worked to create surplus food data of market value and assess the prevailing practices of surplus food management. As mentioned, partnering with a food sharing app \textit{OLIO} (\url{https://olioex.com/}), one study \citep{harvey2020food} presents an analysis of the social network of the app. It reveals the formation of new relationships that divert from the expected donor-receiver ones. This may prove pivotal for policymakers who aim to analyse and encourage ICT-assisted food redistribution. Recent studies \citep{eriksson2020there, davies2019food} discuss the legislation policies affecting food wastage directly or indirectly in the European Union. They point out that the food policies of the land affect proper food redistribution and food waste prevention.

A study \citep{weymes2019re} points out that the use of ICT speeds-up the redistribution process, provides high scalability of the operations and also attracts better quality surplus food. Research \citep{berns2019commodities} also documents the complexities and opportunities involved in food redistribution locally. It also highlights the practices that can transform the way surplus food is perceived and revalued. There also has been research \citep{durr2019exploring} on a specific South African mobile app named \textit{Food for Us} (\url{http://foodforus.co.za/}). This app works on providing food to the under-provided and allowing small-scale producers to get their produce to markets. This paper reveals the need for developing a strong social networked system build around technological platforms such as this app, to help find alternate markets for unsold farm produce. Another study \citep{spring2019sites} provides valuable learning by contrasting images between different food handling practices. It ponders on the need of knowing and paying attention to the details that that help differentiate between food and waste. One study \citep{arcuri2019food} portrays food redistribution as a double-sided sword to fight against food insecurity and food waste. It discusses the efficacy of the anti-waste/pro-donation law of Italy in addressing both food waste and insecurity. It points out that its effectiveness lies in bringing together the different actors in the food management process for tackling the food insecurity issue. Another study \citep{king2019identifying} focuses on the concept of a peer-to-peer (P2P) food sharing system based on high internet usage and people interested in food sharing. This study acts as a base to identify future P2P food collection and redistribution zones and the actors they can attract. A different study \citep{light2019platforms} analyzes the creation and accumulation of networks of sharing individuals, and how digital platforms can aid in the scaling of such an approach keeping in mind the sustainability of such initiatives. One study \citep{davies2019urban} has directly assessed the scenario of sharing food in cities and has viewed it as a field of experimentation and innovation.

It has been recognised \citep{el2018transition} that ICTs do contribute towards a smoother transformation to sustainable food systems by increasing resource productivity, reducing inefficiencies, decreasing management costs, and improving food chain coordination. As a matter of fact, a study \citep{davies2018re} observes the contribution of ICT in the redistribution of surplus prepared or cooked food which has a very short window of recoverability before it spoils. One study \citep{rut2018sharing} terms food sharing practices as ``messy" pointing out that it includes diverse ranges of participants and practices that also vary over space and time and get connected via both physical and virtual platforms (ICTs). As noted previously, a study \citep{weymes2018disruptive} points out that ICT intervention resulted in the recovery of surplus food at all stages of food production. However, maximum recovery was noted in the retail stage. Research \citep{facchini2018food} finds production beyond need, poor management, and bad consumer behaviours as the reasons that give birth to the difficult challenge that is food wastage on a global scale. One study \citep{frigo2018working} explores how food donation facilitates the transition towards a circular economy and bring together the diverse players of the process. It draws an important conclusion that multi-agent collaborations are the key towards a circular economy. A thesis \citep{rombach2018new} presents the different motives and interactions of the different actors involved in the act of food distribution. Another study \citep{davies2018fare} summarizes the food redistribution initiatives taken by individuals across more than ninety countries. It evaluates these initiatives in terms of sustainability and creates a database that depicts the transformation of the society towards sustainability.

Research \citep{davies2017making} acknowledges the positive impact of ICT in food redistribution and also creates a database of such food sharing activities across several cities, countries and continents in order to facilitate the identification and analysis of repetitive patterns and temporal trends in ICT-mediated surplus redistribution. Food sharing, coupled with structural changes along the production line and better consumer habits, has been deemed \citep{falcone2017bringing} as a potential solution towards reducing food waste. A study \citep{mousa2017organizations} establishes the fact that more than one out of eight people in the United States is affected with food insecurity in spite of around one thirds of all food production going into landfills. It acknowledges that food redistribution organizations do mitigate this insecurity, although more in some states of the country than in others. Another study \citep{vittuari2017second} recognizes the positive impacts of food redistribution on the environment, economy, and the society in Italy. A different study \citep{berry2017sharing} acknowledges the contrasting challenges of food waste and food insecurity in the US. Based on the analysis of K–12 school system in Maine, it points out that very little is known about the strategies that schools use to handle the considerable amount of food waste that they produce. A study \citep{ananprakrit2017traceability} studied the prevailing traceability practices of Stockholm’s Stadsmission food bank. Although the current practices proved to be sufficient to provide food safety and quality, potential for systematic errors was also detected. A study \citep{philip2017technical} observes the working of an Israeli food bank using a different logistical model from the other food banks. They have nonprofit organizations (NPOs) as intermediaries that add fresh produce to the surplus thereby improving the food value of the redistribution. One study \citep{fleming2017estimating} contrasts the environmental impact of food redistribution from donation to consumption against that of alternative options like landfilling and composting. Although the facts favoured composting, a better optimized food redistribution, utilizing benefits of using the other alternative options in the process, seems like the win-win solution of the future.

A study \citep{jurgilevich2016transition} accepts that the unwise and inefficient use of food resources has rendered us in need for a transition towards sustainable practices. Another study \citep{mourad2016recycling} highlights the three phases of handling surplus food waste, namely, prevention, recovery, and recycling. This takes the utilization of surplus food waste to beyond human consumption. As already stated, the results of ICT application to the various stages of food wastage has been studied \citep{ciaghi2016beyond}. ICT-based solutions have also been identified \citep{svenfelt2016sustainable} for efficiency through monitoring and assessment of environmental impact, enhanced transparency and traceability in the food system, creation of a network between actors in the food chains, and to influence and change food practices. A study \citep{garrone2016reducing} explores multiple case studies to provide available surplus management options and the factors that make these options attractive and applicable. Another study \citep{gram2016food} finds out and suggests certain recommendations for the improvement of food redistribution. A research \citep{persson2016food} contrasts a food redistribution environment with and without a food bank into the picture and evaluates the effects of a food bank on the environment and the finance of the participants of the activity. Another research \citep{anselmo2016re} studies \textit{Re-Food} (\url{https://www.re-food.org/en}), a Portuguese food redistributor, and finds out that logistic related issues stop it from fully eliminating food waste. It also notes that the firm brings together people from different stages of the redistribution activity to adhere to a common cause that is the local elimination of food waste and hunger issues. There is research \citep{orgut2016modeling} on mathematical models of the food circulation by food banks to analyze and optimize the effective and equitable distribution of food, i.e., food distributed to each service area should be proportional to the demand of food in that service area.

One study \citep{vlaholias2015charity} puts forward the concept of food redistribution and the principles that follow it, and analyse the food redistribution activities on that basis. A Norwegian study \citep{capodistrias2015reducing} takes the solution of food waste generation to beyond human redistribution and consumption. It includes approaches like fodder for animal, biogas generation, and even compost, stretching the utility of surplus beyond the human edible window. One study \citep{silvennoinen2015food} carried out in Finland calculates the number of cooked meals at up to 10,000 a year, and that of redistributed food bags as up to 270,000 bags a year by one organization, from surplus donated or redistributed food. Another study \citep{neff2015reducing} analyzes government policies with respect to handling food waste generation and overall people welfare and categorizes them into policies those help achieve a relationship between both the above goals and those which help achieve one of them while degrading the situation for the other.

Food rescue has been highlighted \citep{lindberg2014food} as an initiative in the emergency food sector internationally as an attempt to reduce food waste and to improve supplies to providers and consumers. A study \citep{caraher2014surplus} brings into light the steps through which food producers recover and donate their surplus food for redistribution to the food banks. It provides valuable insight on the visceral operations of food organizations that relate to their tax savings, waste management, and society rapport. While accepting surplus food redistribution as a key solution to food waste reduction, another study \citep{garrone2014opening} takes one step ahead by also suggesting a model for surplus food generation and management. A study \citep{downing2014food} notes that although food banks are a successful concept in the food management process, the government does not track its usage. Research \citep{kim2014food} exists on the food redistribution systems that existed in China at around 200 BC. It notes the redistribution by the emperor to the commoners in need and noted a bias towards saving physically capable farmers rather than addressing to the needs of the economically endangered ones. It also sheds light on an exceptional method of redistribution used by the rulers to feed people of inferior status through feast leftovers. Another study \citep{giuseppe2014economic} proposes a model to maximize the economical benefits of food redistribution for the retail donor organizations. The model suggests optimal time for withdrawing food items from shelves for redistribution and also donation quantities for human and livestock consumption such that retailer profit is maximized.

\subsection{Food Sharing Communities}

To understand the surplus redistribution frameworks in operation, we need to analyze the models used by various food sharing communities. According to \textit{Food Tank} (\url{https://foodtank.com/}) \citep{foodtank}, there are several such food sharing communities that are already in operation globally. Some of the leading names amongst them include \textit{412 Food Rescue} (\url{http://412foodrescue.org/}; courtesy: \textit{Ariel Procaccia} and group) operating in Pittsburgh, Pennsylvania, United States; \textit{Copia} (\url{https://www.gocopia.com/}) operating in San Francisco, California, United States; \textit{Community Food Rescue} (\url{https://communityfoodrescue.org/}) operating in Montgomery County, Maryland, United States; \textit{Food Cowboy} (\url{http://www.foodcowboy.com/}) operating in Bethesda, Maryland, United States and \textit{FoodLoop} (\url{https://www.foodloop.net/en/}) operating in Cologne, Germany. In India, a developing country, some notable presences include \textit{Feeding India} (\url{https://www.feedingindia.org}), operating in Kolkata, West Bengal, and 16 other cities in other states; \textit{Robin Hood Army} (\url{https://robinhoodarmy.com}), operating in Kolkata, West Bengal, and 8 other cities in other states; \textit{Roti Bank by Dabbawalas} (\url{https://rotibankfoundation.org}), operating in Mumbai, Maharashtra; \textit{Mera Parivar} (\url{https://www.meraparivar.org/}), operating in Gurugram, Haryana; and \textit{Shelter Don Bosco} (\url{www.shelterdonbosco.org}), operating in Mumbai, Maharashtra.

Although, some communities use ICT platforms to match donors and receivers (like \textit{Copia:} \url{https://www.gocopia.com/}), most others use ICT in much simpler forms like apps for volunteer organization or social media for receivers to connect with the donors. This lack of complexity of the usage of the ICT platform renders available a wide gap of possibilities that can be addressed to smoothen the surplus food redistribution process further. With a more complex use of ICT, the communities can go across borders, redistribute faster, communicate seamlessly, reach a wider audience, track food quality and movement, and exploit higher scalability, just to name a few. To sum up, these, and several other such, communities have already been using ICT-mediated methods successfully to move surplus food from parties, organizations, individuals etc. to people with food deficit. Although, when it comes to developing countries like India, as the facts point out, their presence in several of the smart cities, not including the upcoming smart ones, is very limited.

\subsection{ICT Supporting Research}

ICT, with time, is gaining popularity in bridging the gap between the surplus and the deficit. It is also evident that ICT is playing an important role as a mediator in achieving the goal of a sustainable food future. As stated previously, a study \citep{davies2018re} points out that food is wasted at various steps while it reaches the ultimate consumers. However, with the intervention of ICT, the majority of the surplus recovery happened at the retail stage. \textit{Foodsharing} (\url{https://foodsharing.de/}) pitches food sharing as a charitable practice. However, modern food sharing is mostly fueled from social and economic standpoints, often mediated by ICT. As mentioned, another research \citep{ganglbauer2012creating} conducted in the US, pointed out that maximum food wastage occurs in the households. The use of ICT can attack this problem by attempting behavioural modification through the culture of good, long term food handling practices. ICT is pointed out as valuable tool capable of engaging consumers in cultivating better practices while consuming less of their time. Results of ICT application to the various stages of food wastage, as mentioned before, are documented as well \citep{ciaghi2016beyond}. The popularity of ICT platforms - web, social media and handheld device apps - in redistributing surplus food with leading initiatives like \textit{Food Runners}, \textit{Copia}, \textit{RePlate} and \textit{Food Recovery Network} that exploit the benefits of the technology platforms, as already stated \citep{weymes2019re}, reveals our gaping scope of research. It was also observed that using technology for food redistribution attracted better quality surplus food and increased the scalability of the operation. It noted that redistribution, without the use of ICT, enforced limits like communication gaps, distribution delays and physical distances, on the operation of communities. The use of ICT led to higher satisfaction in the redistribution process, greater recoverability of food, lesser landfills, faster movement of food, better scalability and an overall better quality of the surplus.

\section{Method} \label{Sec-3}

An app will be used as an ICT-based mediator for surplus food distribution. On startup, the app will have options for registration and login. Then it will allow Users (hereafter interchangeably referred to as \textit{Agents}) to raise food requests as donors, receivers or volunteers. The app will take other required information like location, surplus food types, their expiry dates (packaged food items), preparation times (cooked food items), pickup, drop, donation, receipt related information, etc. as input as well. Just before the request submission, the agent will be able to set preferences for donation, pickup, or volunteering, as applicable. Under the hood, the app will be using double-sided market based game-theoretic algorithms to map donors with receivers and volunteers, if available, taking into account their preferences and location settings along with the timing windows for which they have opted.

\section{Notations and Problem Formulation} \label{Sec-4}

\subsection{The Agents Involved}

Our model has three main agents:
\begin{itemize}
    \item The $d$ \textit{Donors}, $\mathbb{D} = \{\mathbb{D}_1, \mathbb{D}_2, .., \mathbb{D}_d\}$,
    \item The $r$ \textit{Receivers}, $\mathbb{R} = \{\mathbb{R}_1, \mathbb{R}_2, .., \mathbb{R}_r\}$,
    \item The $v$ \textit{Volunteers}, $\mathbb{V} = \{\mathbb{V}_1, \mathbb{V}_2, .., \mathbb{V}_v\}$.
\end{itemize}

All agents mentioned here, and henceforth, are representatives of the agents' food donation/requirement/transportation requests. One agent can have multiple requests represented by different Request IDs at the same time as we will see later in the paper. Here, nothing is assumed regarding the relation between the number of donors ($d$), that of the receivers ($r$), and that of the volunteer requests ($v$). However, under the current scenarios prevailing in developing countries like India, it is fair to assume that $d<r$. Also, the number of volunteer requests ($v$) is expected to fluctuate wildly depending on several real-time parameters.

\subsection{The Information Gathered}

Each agent submits/allows, as applicable, the details presented in \hyperref[Tab-1]{\textbf{Table 1. Information Gathered}} at the time of registration/app usage. All the three agents are abstracted into three classes and an object is instantiated for each agent involved. All the above details of each agent are accessible via functions defined in the respective object classes. Details on each piece of information have been discussed later in this paper. This information collected from the agents are only displayed to other agents involved in a match generated by the algorithm. This means that donors can know the details of the receiver who will receive the donor's donation and the volunteer who will be doing the transportation. This goes all ways, that is, receivers also can view the donor and the volunteer details upon a match produced by the algorithm, as can the volunteers view the other two agents' details.

\begin{table}
    \tbl{Information Gathered}
    {
        \begin{tabular}{lll} 
            \toprule
            \multicolumn{1}{c}{\textbf{Donors}} & \multicolumn{1}{c}{\textbf{Receivers}} & \multicolumn{1}{c}{\textbf{Volunteers}}\\
            \midrule
            Location & Location & Route\\ 
            Type of food to donate & Type of food required & Type of transport (motored or not)\\
            Amount of food to donate & Amount of food required & Payload capacity\\
            Packaging of the food items & \multicolumn{1}{ c }{-} & Air-conditioning status of the transport\\
            Preparation/Expiry time for food & \multicolumn{1}{ c }{-} & \multicolumn{1}{ c }{-}\\
            Image of the food to be donated & \multicolumn{1}{ c }{-} & \multicolumn{1}{ c }{-}\\
            Pickup date and time range & Requirement date and time range & Availability date and time range\\
            Preferred receivers, if any & Preferred donors, if any & Receivers, if any\\
            \bottomrule
        \end{tabular}
    }
    \label{Tab-1}
\end{table}

\subsection{The Data Structures Involved}

There will be a list $\mathbb{P}$, maintained at the server-end, defining types of food as belonging to \textit{Perishable} or \textit{Non-Perishable} categories, with mixed food type donors/receivers defaulted to the \textit{Perishable} category. Another list $\mathbb{C}$, also maintained at the server-end, grows by adding to itself new agents who have active/unmatched requests (not just logged into it). A list $\mathbb{M}$, created by the mechanism itself, is to store and update the matching outcomes of the mechanism and display the same to the involved agents. Classified on the basis of the \textit{type of food} field detail of the donors and the receivers, four lists further bifurcate the matching task into those of perishable and non-perishable food items. This is shown in \hyperref[Tab-2]{\textbf{Table 2. Bifurcation of Food Requests}}. All these lists are used by a centralized website/server to carry out the matching process based on the agent details. It then displays the matched details to the agents involved in the respective matching.

\begin{table}
    \tbl{Bifurcation of Food Requests}
    {
        \begin{tabular}{m{0.23\textwidth}m{0.23\textwidth}m{0.24\textwidth}m{0.24\textwidth}} 
            \toprule
            \multicolumn{4}{c}{\textbf{All Food Requests}}\\
            \midrule
            \multicolumn{2}{c}{\textbf{Perishable}} & \multicolumn{2}{c}{\textbf{Non-Perishable}}\\ 
            \midrule
            \multicolumn{1}{c}{\textbf{Donations}} & \multicolumn{1}{c}{\textbf{Receipts}} & \multicolumn{1}{c}{\textbf{Donations}} & \multicolumn{1}{c}{\textbf{Receipts}}\\
            \midrule
            The Perishable Food Donors ($\mathbb{PFD}$) & The Perishable Food Receivers ($\mathbb{PFR}$) & The Non-Perishable Food Donors ($\mathbb{NPFD}$) & The Non-Perishable Food Receivers ($\mathbb{NPFR}$)\\ 
            \bottomrule
        \end{tabular}
    }
    \label{Tab-2}
\end{table}

\subsection{The Problem Formulation} \label{ProbForm}

We have $d$ donation requests $\mathbb{D}=\{\mathbb{D}_1, \mathbb{D}_2,..,\mathbb{D}_d\}$ to donate surplus food, and $r$ requirement requests $\mathbb{R}=\{\mathbb{R}_1, \mathbb{R}_2,..,\mathbb{R}_r\}$ to receive these donations. Also, $v$ volunteer requests $\mathbb{V}=\{\mathbb{V}_1, \mathbb{V}_2,..,\mathbb{V}_v\}$ are to mediate the transportation of these donations. The challenge here is to suggest a proper matching between these requests to make this process execute with minimum manual intervention. The simplified process to achieve the same is shown in \hyperref[Fig-1]{\textbf{Figure 2. Process Flow}}.

\begin{figure}
    \centering
    \includegraphics[width=\textwidth]{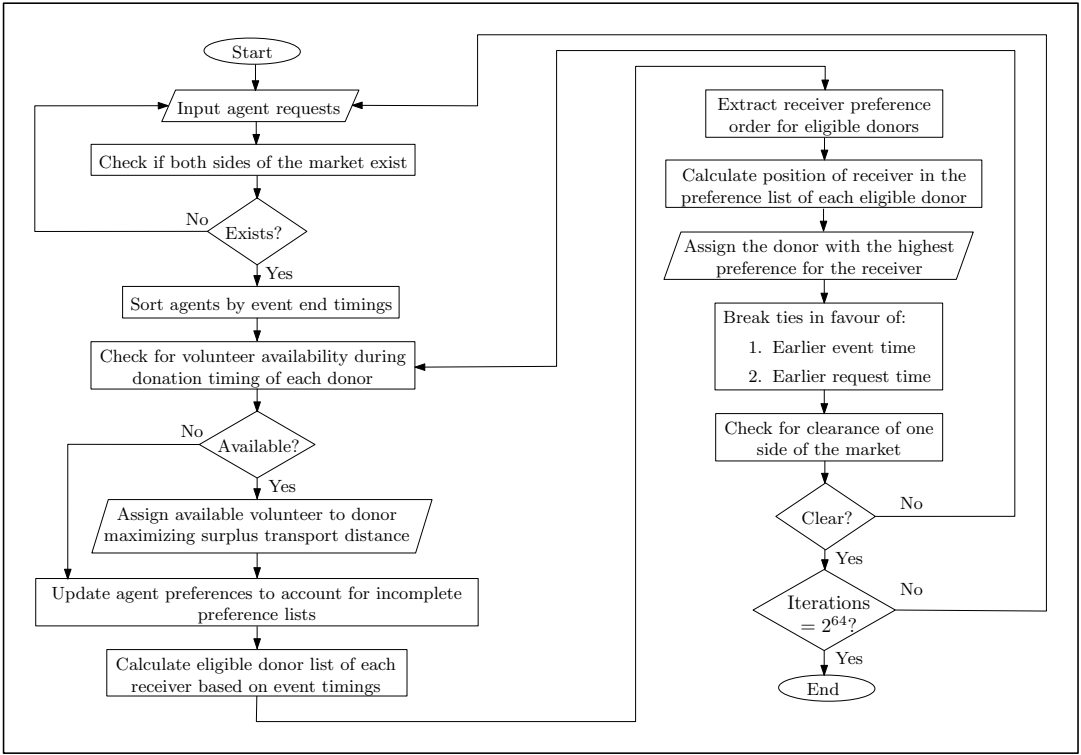}
    \caption{Process Flow}
    \label{Fig-1}
\end{figure}

Certain assumptions and constraints have been imposed on the system to make it simple. When tweaked, they change the environment of work for the model. These assumptions and constraints frequently take the face of a few thresholds or parameters used throughout the algorithm and often represent manageable, minimum, or maximum acceptable quantities for the system. These are defined as follows:

\begin{itemize}
    \item $T_o$ minutes is the minimum overlap time between a donor and a volunteer for the later to be assigned to the former. This threshold provides the volunteer time to reach the donor and can be set based on the traffic conditions of the city of operation.
    \item $T_l\%$ is the off routing threshold for the volunteers. In simpler terms, it is expected that volunteers can manage going off their route by $T_l\%$ of their travel distance to address the transportation requests of donors. It can be argued that the maximum total off-routing percentage ($\Gamma\%$) for one meal transportation under this setting will always be less than equal to $4 \times T_l\%$ of the volunteer route distance (proof included in Section \ref{Analysis}).
    \item $T_m$ grams is the threshold for a healthy meal size. This ensures proper nutrition of the meal consumer.
    \item $T_a\%$ of the donation weight is the threshold extra payload capacity of any volunteer's transportation over and above the capacity of the donation weight to be able to comfortably transport the donation with its packaging. This extra capacity is to allow for a smooth transportation of the food items without causing damage to them.
    \item $T_P^{nm}$ kilometers is the threshold distance that \textit{perishable food} can travel without spoiling in a \textit{non-AC and non-motored} transportation when no volunteer has been assigned for the transportation.
    \item $T_P^m$ kilometers is the threshold distance that \textit{perishable food} can travel without spoiling in a \textit{non-AC and motored} transportation when no volunteer has been assigned for the transportation.
    \item $T_{NP}$ kilometers is the threshold distance for \textit{non-perishable food} to travel in a non-motored transportation in case no volunteer has been assigned for the transportation.
    \item $T_d$ minutes is the threshold time before the donation start and end timings during which the donation request is considered available for matching and transportation assignment. This threshold is to account for real-time delays in the matching process and the volunteer arrival at the donor locations.
    \item $T_r$ minutes is the threshold time before the requirement start and end timings during which the requirement is considered available for donation search process. This threshold is to account for real-time delays in the matching process, the food donations from the donors, and transportation of the same by the volunteers to the receiver locations.
    \item $T_w$ minutes is the threshold beyond which a displayed match will be automatically cancelled when kept unaccepted by any of the involved agents and all requests involved in it will be considered for the next matching iteration.
\end{itemize}

\noindent All requests are raised by the agents ($\mathbb{D} \cup \mathbb{R} \cup \mathbb{V}$) in the mobile app and are first queued into a list $\mathbb{C}$ for processing. The variables (agent details) extracted for the matching process are mentioned in \hyperref[Tab-3]{\textbf{Table 3. Variables Extracted for Processing}}. To facilitate the matching of perishable food requests before the non-perishable ones, all food requests are classified as already in \hyperref[Tab-2]{\textbf{Table 2. Bifurcation of Food Requests}}.

\begin{table}
    \tbl{Variables Extracted for Processing}
    {
        \begin{tabular}{m{0.28\textwidth}m{0.32\textwidth}m{0.36\textwidth}} 
            \toprule
            \multicolumn{1}{c}{\textbf{Donors}} & \multicolumn{1}{c}{\textbf{Receivers}} & \multicolumn{1}{c}{\textbf{Volunteers}}\\
            \midrule
            Location & Location & Route\\ 
            Type of food to donate & Type of food required & Type of transport (motored or not)\\
            Amount of food to donate & Amount of food required & Payload capacity\\
            \multicolumn{1}{c}{-} & \multicolumn{1}{c}{-} & Air-conditioning status of the transport\\
            Pickup date and time range & Requirement date and time range & Availability date and time range\\
            Preferred receivers, if any & Preferred donors, if any & Receivers, if any\\
            \bottomrule
        \end{tabular}
    }
    \label{Tab-3}
\end{table}

\begin{figure}
    \centering
    \includegraphics[width=\textwidth]{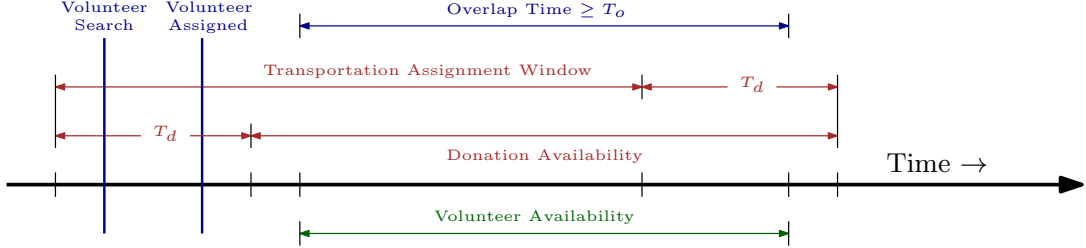}
    \caption{Volunteer Assignment Chronology}
    \label{Fig-2}
\end{figure}

\noindent Matching of food items are done internal to these groups. An additional volunteer group, which is involved in the transportation of both perishable and non-perishable food items, is also present to classify the volunteer requests. Volunteer assignment for each donation request is done $T_d$ minutes before donation availability and with donor-volunteer pairs having at least $T_o$ minutes of availability overlap to account for real-time delays as shown in \hyperref[Fig-2]{\textbf{Figure 3. Volunteer Assignment Chronology}}. Similarly, receivers are considered for matching $T_r$ minutes before their requirement start time.

\begin{figure}
    \centering
    \includegraphics[width=\textwidth]{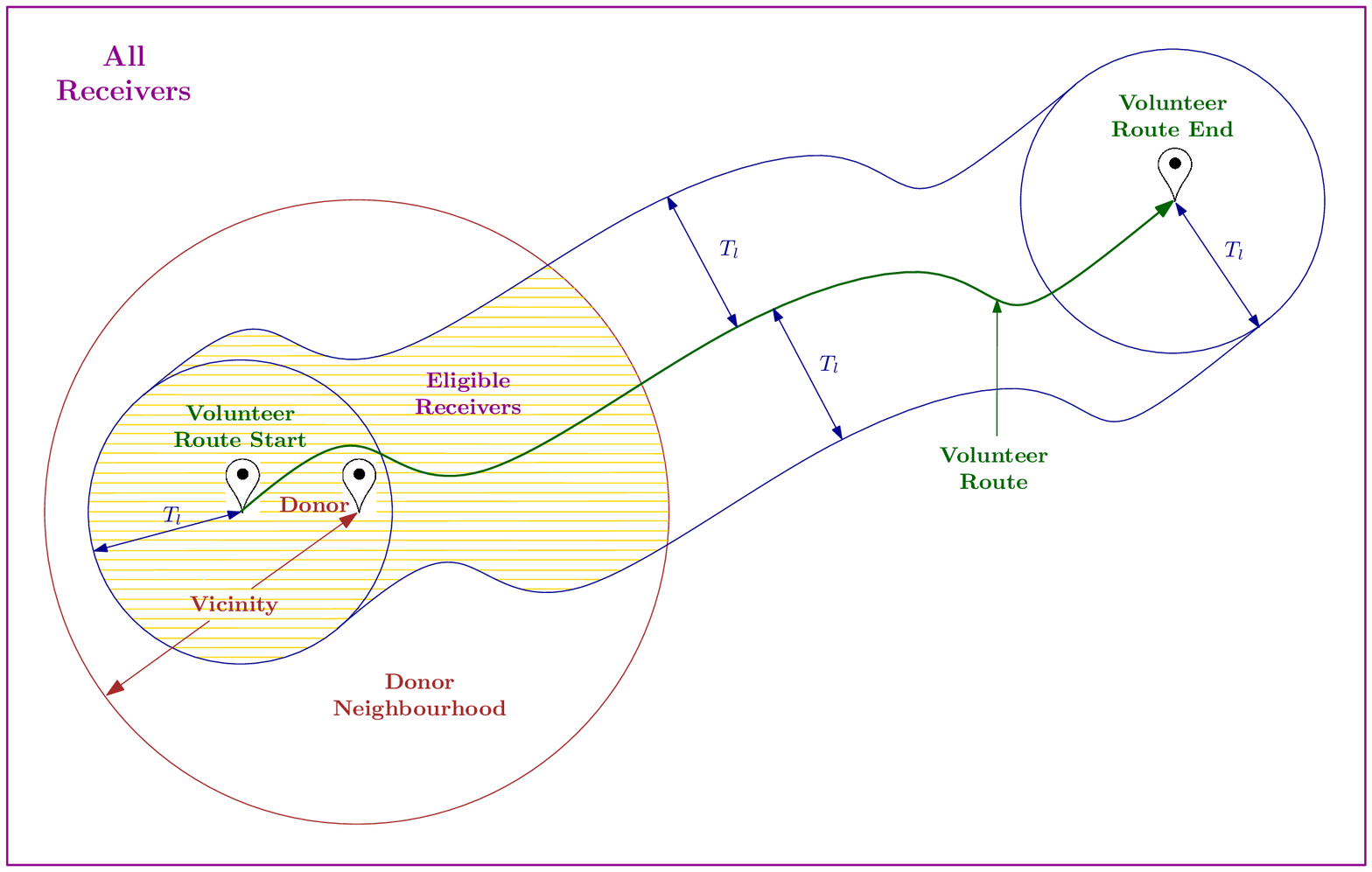}
    \caption{Receiver Eligibility Through Priority Modification}
    \label{Fig-3}
\end{figure}

Let us now follow a donation request from a donor to a receiver via a volunteer. Say, the $i^{th}$ donor wants to donate some food from the $j^{th}$ location, and is offering food of the $k^{th}$ type. The ICT-based platform splits this request into several requests of meal size $T_m$. Let us, for the time being, track the movement of this donor's $l^{th}$ meal. Therefore, we are now effectively tracking the $l^{th}$ meal, of the $k^{th}$ food type, donated by the $i^{th}$ donor, from the $j^{th}$ location. Let this donation request be our virtual donor $\mathbb{D}_{ijkl}$. We are now in search of a volunteer for this donation. So, we now shortlist all volunteers in this location who have a minimum $T_o$ minutes of availability overlap with this virtual donor. We narrow down this list further by selecting only those volunteers who have $\mathbb{D}_{ijkl}$ in a radius of $T_l\%$ of their travel distance, from their start location. This is shown in \hyperref[Fig-3]{\textbf{Figure 4. Receiver Eligibility Through Priority Modification}}. We choose the volunteer $\mathbb{V}_m$ who maximizes the distance to which this donation $\mathbb{D}_{ijkl}$ can be transported subject to certain restrictions. This distance $Vicinity$, that determines the donor neighbourhood, is lower bound by $T_P$ for perishable food, and $T_{NP}$ for non-perishable ones. Volunteers are only considered if the payload capacity of the volunteer is $T_a\%$ more than the donation weight of $\mathbb{D}_{ijkl}$. The AC status of the transportation and whether the vehicle is motored or not are also factors in choosing the volunteer. This determines how far the surplus can be sent for consumption before it spoils. In case we did not have any volunteer to assign to this donor, we would still proceed to receiver assignment process bypassing the volunteer assignment step. However, in that case, the $Vicinity$ would be scaled down to its lower bound ($T_P^{nm}/T_P^m/T_{NP}$) for the type of food being donated and whether the transport is motored or not, to make the transportation feasible for the agents. If the agents can predict volunteer unavailability in advance, they can themselves submit a volunteering request, in parallel, to aid the transportation themselves. This volunteering request then needs to have the receiver as its only entry in the volunteer's receiver list.

\begin{figure}
    \centering
    \includegraphics[width=\textwidth]{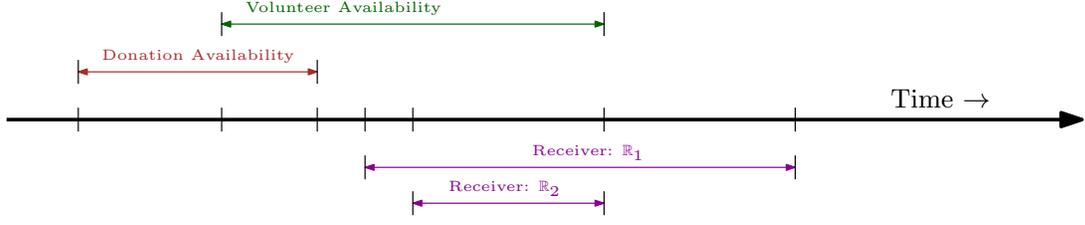}
    \caption{End Time Based Sorting for Receivers}
    \label{Fig-6}
\end{figure}

\begin{figure}
    \centering
    \includegraphics[width=\textwidth]{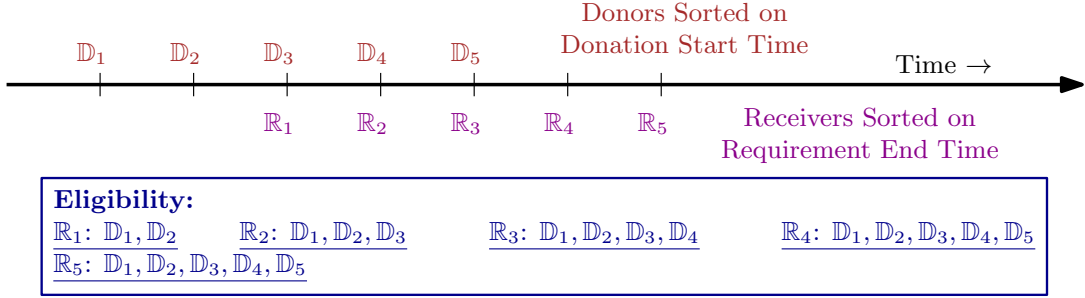}
    \caption{Donor Eligibility per Receiver}
    \label{Fig-7}
\end{figure}

We now have our volunteer $\mathbb{V}_m$ assigned to the virtual donor $\mathbb{D}_{ijkl}$ and ready for receiver assignment. Let us now observe \hyperref[Fig-6]{\textbf{Figure 5. End Time Based Sorting for Receivers}}, wherein one donor wants to donate two meals and two receivers want to receive one meal each. We assume volunteer availability for this donation. The two receivers have timing such that the donation is available before their requirement start times. However, the receiver with a later requirement start time ($\mathbb{R}_2$) has a very short availability. We can now address requirement requests starting with the earliest requirement \textit{start} time or the earliest requirement \textit{end} time. Assuming that we start with the earliest requirement \textit{start} time, by the time the volunteer delivers a meal to $\mathbb{R}_1$ who has the earlier requirement start time, $\mathbb{R}_2$'s requirement might end. Instead, if we start addressing the requirement requests starting with the earliest requirement \textit{end} time, we can cater to the request of $\mathbb{R}_2$ first, followed by $\mathbb{R}_1$ who has a longer request availability time, thereby attending to both the receivers. Thus, we opt for addressing the requirements in an \textit{earliest end time first} fashion. Receivers can only receive food that has already been donated, that is, the donation end time of a donor has to be before the requirement end time of a receiver. \hyperref[Fig-7]{\textbf{Figure 6. Donor Eligibility per Receiver}} clearly depicts the same and also states the donors eligible for each receiver. The priority order of the eligible donors is extracted from the receiver's preference list. For donors not in the receiver's priority list, they are assigned the minimum priority each as in \hyperref[Tab-4]{\textbf{Table 4. Priority Extraction and Augmentation}} for receiver $\mathbb{R}_n$.

\begin{table}
    \tbl{Priority Extraction and Augmentation}
    {
        \begin{tabular}{lll} 
            \toprule
            \multicolumn{1}{c}{\textbf{Eligibility List}} & \multicolumn{1}{c}{\textbf{Original Preference}} & \multicolumn{1}{c}{\textbf{Extraction \& Augmentation}}\\
            \midrule
            \multicolumn{1}{c}{$\underline{\mathbb{R}_n}$: $\mathbb{D}_p, \mathbb{D}_q, \mathbb{D}_r, \mathbb{D}_s$} & \multicolumn{1}{c}{$\underline{\mathbb{R}_n}$: $\mathbb{D}_q \succ \mathbb{D}_t \succ \mathbb{D}_p$} & \multicolumn{1}{c}{$\underline{\mathbb{R}_n}$: $\mathbb{D}_q \succ \mathbb{D}_p[ \succ \mathbb{D}_r=\mathbb{D}_s]$}\\
            \bottomrule
        \end{tabular}
    }
    \label{Tab-4}
\end{table}

\begin{table}
    \tbl{Donor to Receiver Assignment}
    {
        \begin{tabular}{lll} 
            \toprule
            \multicolumn{3}{c}{\textbf{Donor Preferences}}\\
            \midrule
            \multicolumn{3}{c}{$\mathbb{D}_p$: $\mathbb{R}_1 \succ \mathbb{R}_2 \succ \mathbb{R}_3 \succ \mathbb{R}_4 \succ \underline{\mathbb{R}_n} \succ ..$ \hspace{2cm} $\mathbb{D}_q$: $\mathbb{R}_1 \succ \underline{\mathbb{R}_n} \succ \mathbb{R}_3 \succ \mathbb{R}_4 \succ \mathbb{R}_5 \succ ..$}\\
            \multicolumn{3}{c}{$\mathbb{D}_r$: $\mathbb{R}_1 \succ \underline{\mathbb{R}_n} \succ \mathbb{R}_3 \succ \mathbb{R}_4 \succ \mathbb{R}_5 \succ ..$ \hspace{2cm} $\mathbb{D}_s$: $\mathbb{R}_1 \succ \mathbb{R}_2 \succ \mathbb{R}_3 \succ \underline{\mathbb{R}_n} \succ \mathbb{R}_5 \succ ..$}\\
            \midrule
            \multicolumn{1}{c}{\textbf{Updated Preferences}} & \multicolumn{1}{c}{\textbf{Preference Positions}} & \multicolumn{1}{c}{\textbf{Best Match}}\\
            \midrule
            \multicolumn{1}{c}{$\underline{\mathbb{R}_n}$: $\mathbb{D}_q \succ \mathbb{D}_p[ \succ \mathbb{D}_r=\mathbb{D}_s]$} & \multicolumn{1}{c}{$\underline{\mathbb{R}_n}$: $\mathbb{D}_p.5, \mathbb{D}_q.2, \mathbb{D}_r.2, \mathbb{D}_s.4$} & \multicolumn{1}{c}{$\mathbb{R}_n \Longleftrightarrow \mathbb{D}_q$}\\ 
            \bottomrule
        \end{tabular}
    }
    \label{Tab-5}
\end{table}

Let us now get back to our virtual donor $\mathbb{D}_{ijkl}$ to whom volunteer $\mathbb{V}_m$ was assigned and was ready for receiver assignment. We now start the process from the receivers' end and reach this donor of ours as a match. Let, for our tracking purposes, our donor $\mathbb{D}_{ijkl}$ be matched to the receiver $\mathbb{R}_n$, in the future. Then $\mathbb{R}_n$ surely has $\mathbb{D}_{ijkl}$ in its extracted and augmented preference list. But to be matched to each other, $\mathbb{D}_{ijkl}$ also needs to have the highest preference for $\mathbb{R}_n$, compared to all other eligible donors for this receiver, in its extracted and augmented preference list prepared. This list is prepared, similar to what was done for the receiver preference (\hyperref[Tab-4]{\textbf{Table 4. Priority Extraction and Augmentation}}), from the donor's submitted preference list and the receivers in the donor's neighbourhood determined by the calculated $Vicinity$. The assignment process is shown with an example in \hyperref[Tab-5]{\textbf{Table 5. Donor to Receiver Assignment}} for the receiver $\mathbb{R}_n$. In case of ties in preferences as introduced by the augmentation of priorities, the agent with the earliest request submit time is favoured. In addition to that, for the facilitation of the transportation, $\mathbb{R}_n$ also needs to be in a location that is at a maximum perpendicular distance of $T_l\%$ of the assigned volunteer $\mathbb{V}_m$'s route distance, from the route itself. This is as shown in \hyperref[Fig-3]{\textbf{Figure 4. Receiver Eligibility Through Priority Modification}}. Post matching, food transportation from the donor $\mathbb{D}_{ijkl}$ to the receiver $\mathbb{R}_n$ via the volunteer $\mathbb{V}_m$ is as shown in \hyperref[Fig-4]{\textbf{Figure 7. Food Movement Through Volunteer}}. However, if any of the agents involved in a match reject the same, or do not accept it within $T_w$ minutes of the match generation, the match is cancelled and all agent requests involved are rescheduled for matching.

\begin{figure}
    \centering
    \includegraphics[width=\textwidth, height=7cm]{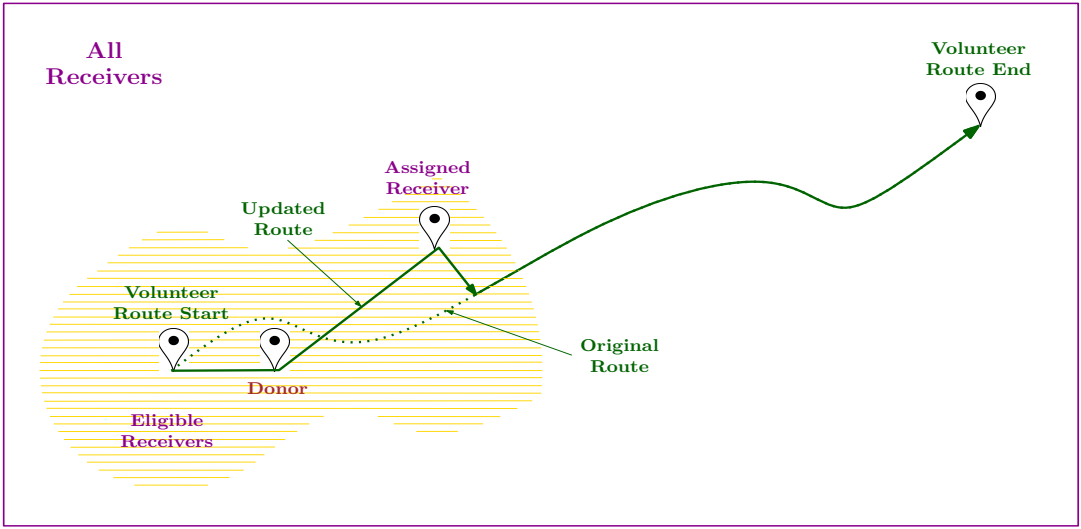}
    \caption{Food Movement Through Volunteer}
    \label{Fig-4}
\end{figure}

\section{The Model Operations}

To understand how the model operates, we will warm-up with the simplest possible view of it and then gradually up-shift gears towards more complex views of the same. An overview of the model is already presented in \hyperref[Fig-5]{\textbf{Figure 1. Overview of the Model}}.

\subsection{Registration and Login}

Before we get into the analysis of complex scenarios, let us first handle the registration and the login processes. As an agent registers with the agent's personal details, contact information, email information, and a password, a specific Agent ID gets generated which is essentially a sequence number, and is to be used for logging into the app.

\subsection{Introducing the App}

Let us now log into the app and take a tour of the same via the different possible routes that an agent can take. While, for volunteer activity, a start and an end location for route determination is prompted, for donors and receivers only one location is accepted on app login. Locations can either be provided using the GPS or manually. Hereafter, different agents are directed through markedly different routes while using the app, although the location acceptance step had already started to differentiate between the volunteers and the other agents. Volunteers will be prompted for the type of transport, whether motored or not, the air-conditioning status of this transport, and the payload capacity that can be transported by them. These three details are later used to calculate the maximum distance to which food can be transported without spoiling it, especially for perishable food. Availability date and time range is also requested which defaults to the current date and time, if left missing. At last, an optional prompt for choosing receivers to deliver the food is provided. This is helpful for those donor-receiver matching pairs who do not receive any volunteer due to availability issues. If this choice is taken by the volunteers, and this receivers' list is not empty then these volunteers only receive those transportation requests that involve the receivers in this volunteer's receivers' list. Otherwise, when this list is empty, all receivers are considered for this volunteer.

Next, we follow the receivers' route through the app. Post entering the location as described above, a receiver needs to select the type of food that is required, in terms of \textit{freshly cooked}, \textit{frozen uncooked}, \textit{frozen cooked}, \textit{packaged solid}, \textit{packaged liquid}, \textit{fresh produce}, \textit{fruits and vegetables}, etc. This is later on used for classification into perishable and non-perishable food requirements. The amount of food required is also fed in along with the requirement start and end date and time. For all calculations, the requirement start and end date and time are considered as the \textit{event start and end date and time}. At last, an optional prompt for choosing preferred donors is provided. When this list is empty, all donors are considered with equal priority. Otherwise, this list, along with the donors left out from this list with lowest priority assigned to them are considered at the time of generating a match.

Lastly, we have the donors. They, similar to the receivers, need to select the type of food that they want to donate along with the weight of the food to be donated. For this, the same list as that for the receivers is prompted to the donors. This is also used for classifying the donation as that of perishable or non-perishable food. However, donors get additional prompts to mention about the packaging of the food item and to upload an image of the food to be donated. The former relates to the container to carry cooked food items and is an essential factor in retaining high recoverability of the same, and the latter is meant for being a visual clue of the current condition of the surplus. Recoverability also depends on the preparation date and time, for cooked food items, or the expiry date, for uncooked ones, and these are the next details that the donors need to provide.

The time elapsed from the preparation date and time to the pickup time of the food, food cooked items, or the time that remains between the pickup date and the expiry date is a major determining factor for safe recovery and reuse of the surplus. Thus, the donors need to provide a date and time range for the pickup. For all calculations, the donation/pickup start and end date and time are considered as the \textit{event start and end date and time}. Similar to the receivers, the donors also receive an optional prompt for choosing their preferred receivers. This list of preferred receivers, if empty, defaults to the neighbourhood of the respective donors. Otherwise, it is considered after amalgamation with the donor's neighbourhood.

Now, for donors, the app chops up the donation amount in meal chunks of a threshold of $T_m$ grams and creates separate requests for the same donor by adding a sequence number to the donor identifier,  the arrival date and time, and the request date and time of the donor. This accounts for the difference in the requested amounts and the donated amounts. However, all this processing is abstracted from all the agents unless the involved agents in different matching triplets of the different requests of the same donor, are different. This is further reinforced by the app clubbing together different matching assignments into one, if the involved agents are the same. It is to be noted that \textit{event start/end date and time} for donors refer to the \textit{food donation start/end date and time}, and that for the receivers refer to the \textit{food requirement start/end date and time}. This is to mark the earliest availability of food for donation and the latest usability of the donated food at the extreme ends of the process. This information is later used to sort the donors and the receivers for matching so as to promote maximum surplus food movement.

All requests have a request identifier associated with it, and is formed by a request sequence number, initiated to 1 at agent registration, appended to the agent identifier which is also a sequence number assigned to the agent at registration. This request sequence number runs independently for each agent. After each matching iteration, the assignments made in each iteration of the FDRM-CA mechanism are displayed to the involved agents via deferred notifications in their respective devices. To facilitate quick request submissions, the app will cache the last used agent data for each type of request amongst donor, receiver, and volunteer requests, raised by the agent. These are later auto-filled in the input spaces when this agent chooses a similar path through the app. Of course, some fields are never auto-filled, like the image for surplus food donation.

\subsection{Basic Operation}

Let us start with the initial scenario where all the donors, the receivers and the volunteers are new to the app and submit no preferences. In this case, volunteers are assigned as per availability. Based on the type and air-conditioning status of the volunteer transport, the neighbourhood of each donor is computed in terms of receivers. Out of them, the receivers in the volunteer's route and $T_l\%$ of the route distance around the route are then treated as the donor's preference. Receiver preferences are not computed, and all donors donating before the requirement start time are considered with equal priority for the receivers having no donor preference. Then, the donors are matched with the receivers as per the computed donor preferences. A slightly different situation occurs when donors provide their preferences and receivers do not. The operation is identical to the previous one except that receivers in this preference list who are present in the neighbourhood, are prioritized over the ones who are not present. If receivers also provide preferences, then chronologically available donors present in this priority list get preferred over those absent in the list.

\subsection{Extended Model}

A more complicated situation arises when both donors and receivers, after using the app for some time, provide preferences of their own. It then resembles the extended mode of operation of the app. As before, volunteers are assigned as per availability, and based on the type and air-conditioning status of the volunteer transport, the neighbourhood for each donor, in terms of receivers, is calculated and then sorted by food requirement end time. Out of them, the receivers in the volunteer's route and $T_l\%$ of the route distance around the route are treated as the donor's preference. All donors donating before the requirement start time are considered for the receiver's preference list. All such eligible donors absent in the receiver's preference are appended to the same with lowest preference. At last, the donors are matched with the receivers as per the computed donor and receiver preferences.

It is practical to assume that volunteer availability will fluctuate wildly with respect to the time of the day, the day of the week, the season, the weather, the ongoing occasions, and such other parameters. Let us now analyse a case where a donor is not assigned a volunteer due to unavailability. This situation is handled in a two fold way. Firstly, the $Vicinity$ of this donor is set in such a way that the transportation distance becomes trivial. Secondly, if the receivers can predict, considering practical situations, that volunteers may be unavailable, they can enroll their own volunteers to do the transportation themselves. In this case, these volunteers provide only this receiver in their receivers' list. When donors mapped to these volunteers are assigned receivers, the volunteers' receiver list is checked before the assignment is made. This also holds true when the receivers are shown a mapping that has no volunteers assigned to it, making the receivers take charge of transportation of the surplus. Essentially, both donors and receivers can be volunteers, simultaneously, as shown in \hyperref[Fig-5]{\textbf{Figure 1. Overview of the Model}}. There will be scenarios where the amount that a donor wants to donate does not match with the amount that a receiver wants to receive or with the amount that a volunteer can transport. All these probable scenarios, with donation amounts greater than a threshold of $T_m$ grams, are handled by chopping down all donor requests into multiple requests, each with a meal size of $T_m$ grams. In case food items are not divisible perfectly into a $T_m$ grams meal size (for packaged food), the least number of units making the meal size over $T_m$ grams is acceptable as a meal. While doing this for each such donor, each meal is treated as a separate donation request and different sequence numbers are augmented at the end of the donor identifier to generate different request numbers for each such chopped request. This same sequence number is also added to both the food donation/requirement time and the arrival time of this donor request to uphold the tie breaking conditions.

The handling of these donations with amount more than $T_m$ grams is handled at the app's client end where it is chopped down to create multiple requests from the same donor using sequence numbers to distinguish them from each other. Similarly, when the agents (donor, receiver, and, if assigned, volunteer) are the same triplet for multiple matched requests, with differences only in the sequence numbers used to disambiguate the requests, they are clubbed and displayed as a single request to the agents. Therefore, the granularity of the slicing and dicing of agent requests are kept abstracted from the agents as much as possible until the involved agents are different for different matching requests. This abstraction is enforced both during the placement of requests by donors and during the display of the generated match to the involved agents. Thus, each donor can have multiple volunteers and receivers assigned to the donor. Receivers, more often than not, will have multiple donors and/or volunteers assigned to them. Note that receiver requests are not chopped into meals. Rather, these requests are treated as a single unit with an \textit{Amount} property along with a \textit{RemainingAmount} method that returns the unfulfilled requirement amount when called via a \textit{Receiver} class object. A similar method, \textit{RemainingPayload}, of the \textit{Volunteer} class returns the payload capacity that remains unused for a volunteer. Note that when multiple requests of the same donor are generated by the app as a result of request slicing and dicing, they are listed one after the other, making a volunteer, chosen as the optimal transporter for the previous request from this donor and still having unused payload capacity, be assigned to the same donor's successive pickup requests until the volunteer's payload capacity is exhausted. This takes us to the next case where some, or all, of the agents involved in a mapping, reject the same. The app will then automatically reject the complete matching and add the involved agents to the global list of active agent requests, $\mathbb{C}$, for matching in the upcoming rounds. The addition of these agents in the list ensures that the agent requests are made available again for proper processing in the next iteration of the mechanism.

\section{Mechanism} \label{Sec-5}

The Food Donor to Receiver Matching with Chronological Acceptance (FDRM-CA) mechanism is proposed for matching donors to receivers, if possible, via volunteers. This section is to discuss the mechanism and its sub-processes in detail in a breadth-first travel of the main algorithm.

\subsection{Detailed FDRM-CA}

The FDRM-CA mechanism actuates itself via the following three steps. These steps are repeated for a very large number of times to handle all incoming requests.

\begin{enumerate}
    \item Constant and parallel check for unaddressed agent requests using the New User Interrupt Routine (NUIR),
    \item Classification and TriFurcation of Users (CTFU) into Perishable, Non-Perishable, and Volunteer categories, and
    \item Chronological Acceptance, inspired by Roughgarden's lecture on \textit{Stable Matching} and Gale \& Shapley's algorithm \citep{gale2013college, roughgarden2016cs269i}, using Double Tie Breaking (CA-DTB), for the donor-receiver matching.
\end{enumerate}

\IncMargin{1em}
\begin{algorithm}
\label{Algo-1}
\DontPrintSemicolon
\SetAlgoVlined
\SetKwBlock{Parallel}{do parallel}{end}
\caption{FDRM-CA()}
    $\mathbb{C}$, $\mathbb{V}$, $\mathbb{PFD}$, $\mathbb{PFR}$, $\mathbb{NPFD}$, $\mathbb{NPFR} \leftarrow \phi$ \tcp*[f]{Initiate all lists}\\
    \Parallel(\tcp*[f]{Execute in parallel})
    {
        NUIR$(\mathbb{C})$ \tcp*[f]{Add unaddressed requests to the list}
    }
    \Parallel(\tcp*[f]{Execute in parallel})
    {
        \For(\tcp*[f]{Handle all requests in $n \rightarrow 2^{64}$ iterations}){$i \leftarrow 1$ \KwTo $n$}
        {
            $\mathbb{C}$, $\mathbb{V}$, $\mathbb{PFR}$, $\mathbb{NPFR}$, $\mathbb{PFD}$, $\mathbb{NPFD} \leftarrow$ CTFU$(\mathbb{C}$, $\mathbb{V}$, $\mathbb{PFD}$, $\mathbb{PFR}$, $\mathbb{NPFD}$, $\mathbb{NPFR})$ \tcp*[f]{Classify requests}\\
            $\mathbb{M}_P$, $\mathbb{PFD}$, $\mathbb{PFR}$, $\mathbb{V} \leftarrow$ CA-DTB$(\mathbb{PFD}$, $\mathbb{PFR}$, $\mathbb{V}$, $Food \leftarrow PERISHABLE)$ \tcp*[f]{Match perishable food}\\
            \tcp*[l]{Display each match to involved agents}
            \lForEach{$\mathbb{M}_j \in \mathbb{M}_P$}
            {
                $\mathbb{M}_j \rightarrow \mathbb{M}_j.MatchedAgents$
            }
            $\mathbb{M}_{NP}$, $\mathbb{NPFD}$, $\mathbb{NPFR}$, $\mathbb{V} \leftarrow$ CA-DTB$(\mathbb{NPFD}$, $\mathbb{NPFR}$, $\mathbb{V}$, $Food \leftarrow NON\_PERISHABLE)$ \tcp*[f]{Match non-perishable food}\\
            \tcp*[l]{Display each match to involved agents}
            \lForEach{$\mathbb{M}_j \in \mathbb{M}_{NP}$}
            {
                $\mathbb{M}_j \rightarrow \mathbb{M}_j.MatchedAgents$
            }
        }
    }
\end{algorithm}
\DecMargin{1em}

The detailed mechanism is as provided in \hyperref[Algo-1]{\textbf{Algorithm 1: The FDRM-CA Mechanism}}. After initializing all the lists (line 1), a parallel process for queuing unaddressed agent requests, the NUIR sub-process, starts (line 3). These lists are emptied of elements that have completed all processing on themselves, in each iteration of the mechanism, with the exception of agents who have not yet received a match. To avoid clutter, this emptying has not been explicitly stated in the algorithm. While the queuing of unattended agent requests continues for a very large number of iterations, it need not stop for the agents to be classified into Perishable, Non-Perishable and Volunteer groups (line 6), as it modifies a global list in parallel to the main mechanism program flow. The Perishable and Non-Perishable groups help prioritize the matching of perishable food items over the non-perishable ones, while the Volunteer group facilitates the movement of the surplus food from the donors to the receivers. This is followed by matching donors with receivers, if possible, via volunteers (lines 7 and 10), and then displaying each match to the involved agents (lines 9 and 12). The preferences of the donors and the receivers are taken into consideration while generating the matching. As already mentioned, the donors who wish to donate perishable food and receivers who wish to receive perishables, are the first ones to be matched.

It is worth mentioning that not all donors/receivers may get a match in their very first matching iteration through the mechanism. Some may be left over for the next iteration of this process destined for better matches as per their own preferences. As this process completes, the matches that were generated, are displayed to the involved agents (lines 9 and 12). Note that, to promote maximum recoverability of perishable food items, the generated matches for perishable food items ($\mathbb{M}_P$) is displayed (line 9) even before the matching for non-perishable food items can start (line 12). These match displays are non-blocking, that is, they do not stop the execution of the FDRM-CA mechanism to wait for the matched agents to accept the match. The classification and matching processes are repeated for a very large number of iterations. Each repetition involves new sets of donation, receipt, and volunteering requests or such unmatched requests from previous iterations. Note that these are not new sets of agents, since the same agents can have multiple requests on the same/different day(s). Instead, these are new sets of requests. Also, old and unmatched requests, as well as requests whose matches have been rejected by the involved agents, are forced to be processed again. The generated matches are displayed in a non-blocking way via device notifications.

\subsection{Detailed NUIR}

The detailed NUIR mechanism is as given in \hyperref[Algo-2]{\textbf{Algorithm 2: The NUIR Process}}. As $n \rightarrow 2^{64}$ (line 2), the algorithm captures all agent requests. This is a simple, interrupt driven mechanism to queue new agent requests for the subsequent iteration of classification and matching (lines 5 to 8), and re-queue old agent requests from unaccepted/rejected matches (lines 10 to 12). At each agent request submission, an interrupt ($NewUserRequestInterrupt$) is generated that triggers this routine to add the agent request to the list of active agent requests after processing for request splitting for multiple meals as per the donation amount (lines 5 to 8). Also, at the rejection of a matching by any of the involved agents or by partial (not by all agents involved in a match) acceptance by $T_w$ minutes of match generation, another interrupt ($MappingRejectionInterrupt$) is generated that triggers this routine to fetch all of the involved agents of this rejected mapping and add them to the same list (lines 10 to 12) for being matched in the next iteration of the matching process. This process runs in parallel with the main process without halting it, and modifies a global list $\mathbb{C}$ of active agent requests (line 1) for consumption in the next iteration of FDRM-CA. Thus, while the matching process is ongoing in the main program body, this process, in parallel, works on readying the requests that will be processed by the next iteration of the matching process.

\IncMargin{1em}
\begin{algorithm}
\label{Algo-2}
\DontPrintSemicolon
\SetAlgoVlined
\caption{NUIR$(\mathbb{C})$}
    \textbf{global} $\mathbb{C}$ \tcp*[f]{Agent requests available to other parallel processes}\\
    \For(\tcp*[f]{Handle all requests in $n \rightarrow 2^{64}$ iterations}){$i \leftarrow 1$ \KwTo $n$}
    {
        \If(\tcp*[f]{Check for interrupts from the app}){Interrupt}
        {
            \tcp*[l]{New agent request}
            \If{$Interrupt=NewUserRequestInterrupt$}
            {
                $\mathbb{A} \leftarrow Interrupt.GetRequest$\\
                \If{$\mathbb{A}.RequestType=\mathbb{D}.UserType \cup \mathbb{A}.Amount \geq 2 \times T_m$}
                {
                    $\mathbb{C}.AddRequest(\mathbb{A}.SplitIntoMeals(T_m))$ \tcp*[f]{Split into meals}
                }
            }
            \tcp*[l]{Unmatched agent requests}
            \If{$Interrupt=MappingRejectionInterrupt$}
            {
                \ForEach{$\mathbb{A} \in Interrupt.GetRequest$}
                {
                    $\mathbb{C}.AddRequest(\mathbb{A})$ \tcp*[f]{Add request to list} 
                }
            }
        }
    }
\end{algorithm}
\DecMargin{1em}

\subsection{Detailed CTFU}

The detailed mechanism is as follows. The CTFU process checks for perishable food item requests and classifies the agent request into perishable/non-perishable categories for donors/receivers (lines 2 to 9). It also classifies agents as volunteers (line 10). It updates five lists - $\mathbb{V}$, $\mathbb{PFR}$, $\mathbb{NPFR}$, $\mathbb{PFD}$, and $\mathbb{NPFD}$ - for consumption by the CA-DTB sub-process that follows it. Do note that these lists may have unconsumed agents from the previous round of CA-DTB, before the current round of CTFU may start. At the end, the algorithm returns the list $\mathbb{C}$ and the updated lists to the main body of the process (line 11) for consumption by the subsequent iterations of CA-DTB.

\IncMargin{1em}
\begin{algorithm}
\label{Algo-3}
\DontPrintSemicolon
\SetAlgoVlined
\caption{CTFU$(\mathbb{C}$, $\mathbb{V}$, $\mathbb{PFD}$, $\mathbb{PFR}$, $\mathbb{NPFD}$, $\mathbb{NPFR})$}
    \ForEach(\tcp*[f]{Classify all requests}){$\mathbb{A} \in \mathbb{C}$}
    {
        \If(\tcp*[f]{Receivers}){$\mathbb{A}.UserType = RECEIVER$}
        {
            \If(\tcp*[f]{Perishable food}){$\mathbb{A}.TypeOfFood = PERISHABLE$}
            {
                $\mathbb{PFR}.AddToList(\mathbb{A})$
            }
            \lElse(\tcp*[f]{Non-perishable food})
            {
                $\mathbb{NPFR}.AddToList(\mathbb{A})$
            }
        }
        \ElseIf(\tcp*[f]{Donors}){$\mathbb{A}.UserType = DONOR$}
        {
            \If(\tcp*[f]{Perishable food}){$\mathbb{A}.TypeOfFood = PERISHABLE$}
            {
                $\mathbb{PFD}.AddToList(\mathbb{A})$
            }
            \lElse(\tcp*[f]{Non-perishable food})
            {
                $\mathbb{NPFD}.AddToList(\mathbb{A})$
            }
        }
        \lElse(\tcp*[f]{Volunteers})
        {
            $\mathbb{V}.AddToList(\mathbb{A})$
        }
    }
    \Return $\mathbb{C}$, $\mathbb{V}$, $\mathbb{PFR}$, $\mathbb{NPFR}$, $\mathbb{PFD}$, $\mathbb{NPFD}$ \tcp*[f]{Return all lists}
\end{algorithm}
\DecMargin{1em}

\subsection{Detailed CA-DTB}

\hyperref[Algo-4]{\textbf{Algorithm 4: The CA-DTB Process}} works on the current agent requests, donors with donations to be available in $T_d$ minutes and receivers with requirements starting in $T_r$ minutes (lines 2 and 3). Next, it gets the volunteers available for each such donor $\mathbb{D}_i$ on the basis of donor-volunteer availability overlap time of at least $T_o$ minutes (line 5), volunteer transport type, and transport payload availability and its air-conditioning status (lines 10 to 15). Then, it matches the volunteer who maximizes the surplus transportation distance (lines 16 and 20). Next, receiver preferences are updated by extracting from it all chronologically available donors, appending donors with the minimum preference each, when required (line 23). This is followed by updating donor preferences by extracting and augmenting, as before, receivers from around the donor inside and on the circle with radius $Vicinity$ and around the assigned volunteer route (line 25). To determine the neighbourhood of each donor $\mathbb{D}_i$, on the basis of relevant receivers, the radius of the neighbourhood is calculated on the basis of a rough estimate of the spoiling time of the food items and the available transportation details. For perishable food items, this radius, the $Vicinity$, gets defaulted to $T_P^{nm}$ for non-motored, and to $T_{P}^m$ for motored transports. For non-perishable food items this $Vicinity$ gets a default value of $T_{NP}$. A volunteer de-route percentage threshold $T_l$ has also been used that symbolizes the percentage of the actual route distance of the volunteer that is a manageable diversion from the actual route of the volunteer for the surplus pickup. Post this, each receiver $\mathbb{R}_i$ is assigned to the eligible donor who has the highest preference for $\mathbb{R}_i$ (lines 26 to 31).

Ties are broken first with food donation start/requirement end timings, and next with request arrival timings. The last donation that is required to match the requirement, even if it overshoots the requirement amount slightly, is accepted anyway ($\sum_{j=1}^{m} \mathbb{D}_j.Amount - \epsilon = \mathbb{R}_i.Amount,\epsilon \rightarrow 0$). This step is repeated until one side of the market is cleared. At the end, the final matching is returned to the main body of the mechanism for displaying to the involved agents. Also, any unaddressed requests from this iteration of the mechanism are returned via their respective lists as well (line 32). Although, while breaking ties, there can be duplicates in the value of \textit{donation start/requirement end date and time}. However, to avoid duplicates in the values of \textit{request arrival date and time,} precautions have been taken. Firstly, they are not actual timestamps, but are sequence numbers generated by a sequencer maintained by the server. Also, these sequences maintain a \textit{total order} in the system and therefore cannot have duplicate values. This is essential in the functioning of the algorithm since this is the supreme tie breaking condition that is checked for donors/receivers having equal priorities for an agent and having the same \textit{donation start/requirement end date and time}. The generated match is stored in the list $\mathbb{M}$ in each iteration of the mechanism and is displayed to the involved agents.

\IncMargin{1em}
\begin{algorithm}
\label{Algo-4}
\DontPrintSemicolon
\SetAlgoVlined
\caption{CA-DTB($\mathbb{D}$, $\mathbb{R}$, $\mathbb{V}$, $Food$)}
    $\mathbb{M} \leftarrow \phi$ \hfill \tcp*[f]{Initialize the matching}\\
    $\mathbb{R}' \leftarrow \{\mathbb{R}_k \vert Now() \in [\mathbb{R}_k.RequirementStartDateTime - T_r$, $\mathbb{R}_k.RequirementEndDateTime - T_r]$, $\mathbb{R}_k \in \mathbb{R}\}$ \hfill \tcp*[f]{Current receivers}\\
    $\mathbb{D}' \leftarrow \{\mathbb{D}_k \vert Now() \in [\mathbb{D}_k.RequirementStartDateTime - T_d$, $\mathbb{D}_k.RequirementEndDateTime - T_d]$, $\mathbb{D}_k \in \mathbb{D}\}$ \hfill \tcp*[f]{Current donors}\\
    \ForEach(\tcp*[f]{Available volunteers}){$\mathbb{D}_i \in \{\mathbb{D}_l \vert \mathbb{D}_l \in \mathbb{D}', Now() \in [\mathbb{D}_l.DonationStartDateTime - T_d$, $\mathbb{D}_l.DonationEndDateTime - T_d]\}$}
    {
        $\mathbb{V}' \leftarrow$ \{$\mathbb{V}_j \vert \mathbb{V}_j \in \mathbb{V}$, $\mathbb{V}_j.RemainingPayload \geq (1 + T_a\%) \times \mathbb{D}_i.Amount$, $\mathbb{V}_j.Route.Start \pm T_l\% \times \mathbb{V}_j.Route.Distance \geq \mathbb{D}_i.Location$, $T_o \leq \mathbb{V}_j.AvailabilityDateTimeRange \cap \mathbb{D}_k.DonationDateTimeRange$, $(\mathbb{V}_j.Receivers = \phi) \cup (\mathbb{D}_i \in \mathbb{V}_j.Receivers)$\}\\
        \If(\tcp*[f]{Defaults for no available volunteer}){$\mathbb{V}' = \phi$}
        {
            \lIf{$Food = PERISHABLE$}
            {
                $\mathbb{D}_i.Vicinity \leftarrow T_P^{nm}$
            }
            \lElse
            {
                $\mathbb{D}_i.Vicinity \leftarrow T_{NP}$
            }
        }
        \Else(\tcp*[f]{Volunteers available})
        {
            \ForEach(\tcp*[f]{Optimize volunteer assignment}){$\mathbb{V}'_j \in \mathbb{V}'$}
            {
                \If{$Food \neq PERISHABLE \cup \mathbb{V}'_j.ACStatus = AC$}
                {
                    $Vicinity \leftarrow \mathbb{V}'_j.Route.Destination - \mathbb{D}_i.Location$
                }
                \Else
                {
                    \lIf{$\mathbb{V}'_j.TransportType=MOTORED$}
                    {
                        $Vicinity \leftarrow T_P^{m}$
                    }
                    \lElse
                    {
                        $Vicinity \leftarrow T_P^{nm}$
                    }
                }
                \If(\tcp*[f]{Upgrade volunteer}){$Vicinity > \mathbb{D}_i.Vicinity$}
                {
                    $\mathbb{D}_i.Vicinity \leftarrow Vicinity$\\
                    \lIf{$\mathbb{D}_i \notin \mathbb{M}$}
                    {
                        $\mathbb{M}.AddMatching$($\mathbb{D}_i$, $\mathbb{V}'_j$)
                    }
                    \lElse
                    {
                        $\mathbb{M}.UpdateMatching$($\mathbb{M}.GetMatching(\mathbb{D}_i)$, $\mathbb{V}'_j$)
                    }
                }
            }
            $\mathbb{V} \leftarrow \mathbb{V} - \mathbb{M}.GetMatching(\mathbb{D}_i)$ \tcp*[f]{Remove volunteer from market}
        }
    }
    \ForEach(\tcp*[f]{Update agent preferences}){$\mathbb{A} \in \mathbb{R}' \cup \mathbb{D}'$}
    {
        \If(\tcp*[f]{Receivers}){$\mathbb{A}.RequestType = \mathbb{R}.UserType$}
        {
            $\mathbb{A}.Preference \leftarrow [\{\mathbb{A}.Preference \cap \mathbb{A}.EligibleDonors(\mathbb{D}')\} \cup \{\mathbb{A}.EligibleDonors(\mathbb{D}') - \mathbb{A}.Preference\}]$
        }
        \Else(\tcp*[f]{Donors})
        {
            $\mathbb{A}.Preference \leftarrow [\mathbb{A}.Preference \cap \{(\mathbb{A}.Location \pm \mathbb{A}.Vicinity)  \cap \mathbb{R}'.All.Location \cap (\mathbb{M}.GetMatching(\mathbb{A}).Volunteer.Route \pm T_l \times \mathbb{M}.GetMatching(\mathbb{A}).Volunteer.Route.Distance)\}]$
        }
    }
    \ForEach(\tcp*[f]{Match receivers}){$\mathbb{R}_i \in \{\mathbb{R}_k \vert \mathbb{R}_k.Amount > 0, \mathbb{R}_k \in \mathbb{R}'\}$}
    {
        \While(\tcp*[f]{Meals per requirement}){$\mathbb{R}_i.RemainingAmount > 0$}
        {
            $\mathbb{R}_i.Preference \leftarrow \mathbb{R}_i.Preference \cap \mathbb{D}$\\
            $\mathbb{D}_j \leftarrow \mathbb{R}_i.Preference.PreferredDonor($ $min(\mathbb{R}_i.Preference.PriorityPosition(\mathbb{R}_i)))$\\
            $\mathbb{M}.UpdateMatching(\mathbb{M}.GetMatching(\mathbb{D}_j)$, $\mathbb{R}_i))$ \tcp*[f]{Agents matched}\\
            $\mathbb{D} - \mathbb{D}_j$\tcp*[f]{Remove donor from market}
        }
    }
    \Return $\mathbb{M}$, $\mathbb{D}$, $\mathbb{R}$, $\mathbb{V}$ \tcp*[f]{Return all updated lists}
\end{algorithm}
\DecMargin{1em}

\section{Analysis} \label{Analysis}

In this section we discuss the \textit{correctness}, \textit{strategyproofness}, \textit{Pareto-optimality}, and \textit{(polynomial) running time} properties of the FDRM-CA algorithm. We follow this up with the simulation results and their analysis.

\subsection{Algorithm Analysis}

\begin{lemma}
FDRM-CA works correctly.
\end{lemma}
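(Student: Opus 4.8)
The plan is to prove correctness \emph{modularly}, attaching one invariant to each sub-routine of \hyperref[Algo-1]{Algorithm 1} and then composing them. I read ``works correctly'' as the conjunction of four claims: (C1) \emph{completeness} --- every request ever placed in $\mathbb{C}$ is, after finitely many iterations, either part of a displayed match or explicitly retained for the next round, so no request is silently lost; (C2) \emph{feasibility} --- every triplet $(\mathbb{D}_{ijkl},\mathbb{V}_m,\mathbb{R}_n)$ the mechanism displays satisfies all structural constraints of Section~\ref{ProbForm} (the $T_o$ overlap, the $T_l\%$ off-routing, the $(1+T_a\%)$ payload bound, the $Vicinity$/spoiling bounds $T_P^{nm},T_P^m,T_{NP}$, the $T_d$/$T_r$ timing windows, and the $T_m$ meal granularity); (C3) \emph{preference-consistency} --- the match respects the extracted-and-augmented preference lists in the deferred-acceptance sense, with ties resolved by the stated double rule; and (C4) \emph{termination} of each call to CA-DTB. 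I would establish (C1)--(C4) for the sub-routines in isolation and then lift the conjunction to FDRM-CA through the parallel composition.

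For the interrupt layer I would first show NUIR (\hyperref[Algo-2]{Algorithm 2}) maintains the invariant ``$\mathbb{C}$ equals the multiset of currently active requests'': the $NewUserRequestInterrupt$ branch adds a fresh request, invoking $SplitIntoMeals(T_m)$ exactly when $\mathbb{A}$ is a donor request with $\mathbb{A}.Amount \geq 2\times T_m$, so the $T_m$-granularity of donation requests is established at entry, while the $MappingRejectionInterrupt$ branch re-inserts every agent of a match that was rejected or left unaccepted for $T_w$ minutes. Next, for CTFU (\hyperref[Algo-3]{Algorithm 3}) I would show its single pass over $\mathbb{C}$ partitions the donor/receiver entries into $\mathbb{PFD},\mathbb{PFR},\mathbb{NPFD},\mathbb{NPFR}$ by the $TypeOfFood$ flag (with mixed types defaulting to \emph{Perishable} via the list $\mathbb{P}$) and routes volunteers to $\mathbb{V}$, leaving no entry unclassified; together with the order of the two CA-DTB calls in \hyperref[Algo-1]{Algorithm 1} (perishable first, displayed before the non-perishable call begins) this yields the claimed prioritisation.

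The core of the argument is CA-DTB (\hyperref[Algo-4]{Algorithm 4}). Here I would verify, in order: (i) lines~2--3 retain exactly the donors inside their $T_d$-window and receivers inside their $T_r$-window; (ii) the set $\mathbb{V}'$ built at line~5 is precisely the feasible volunteers for $\mathbb{D}_i$ (payload $\geq (1+T_a\%)\,\mathbb{D}_i.Amount$, off-route $\leq T_l\%$, overlap $\geq T_o$, and the receiver-list restriction), so the inner loop selects the feasible volunteer maximising the reachable distance and, when $\mathbb{V}'=\phi$, correctly collapses $Vicinity$ to $T_P^{nm}$, $T_P^m$ or $T_{NP}$ according to food type and transport; (iii) the preference-update step yields, for each receiver and each donor, a well-defined total preorder on its eligible partners (eligibility $=$ chronological availability for receivers; inside the $Vicinity$ circle and within $T_l\%$ of the assigned volunteer route for donors), and the double tie-break --- first by donation-start / requirement-end time, then by the server-generated \emph{arrival sequence number} --- is unambiguous precisely because that sequencer imposes a total order with no duplicates; (iv) the \textbf{while} loop is a Gale--Shapley-style acceptance loop each of whose iterations removes one donor from $\mathbb{D}$, so it runs at most $|\mathbb{D}|$ times and clears one side of the market, with the ``accept the last donation even if it overshoots by $\epsilon\to 0$'' rule driving $\mathbb{R}_i.RemainingAmount$ to $0$; and (v) the returned $(\mathbb{M},\mathbb{D},\mathbb{R},\mathbb{V})$ therefore contains only feasible, preference-consistent matches and all still-unconsumed agents. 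Feasibility of the off-routing used in step~(ii) is backed by the separate bound $\Gamma\% \le 4\times T_l\%$ proved in Section~\ref{Analysis}.

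The main obstacle I anticipate is not any single sub-routine but the \emph{parallel composition}: NUIR mutates the global $\mathbb{C}$ concurrently with the main loop, and CTFU's output lists may still hold agents left unconsumed by the previous CA-DTB round. So the delicate part is phrasing the iteration-boundary invariant carefully --- that across the two \textbf{do parallel} blocks a request is never duplicated within a round, never dropped, and is either matched (feasibly, preference-consistently) or carried into $\mathbb{C}$ for the next round --- and then observing that, because the outer loop only runs for $n\to 2^{64}$ iterations, ``correctly'' can only mean partial correctness: under the fairness assumption that the NUIR thread makes progress, every request is eventually served or retained. I would close by assembling (C1)--(C4) into exactly this statement.
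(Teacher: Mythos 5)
Your proposal is correct and follows the same basic skeleton as the paper's proof --- a modular, loop-invariant argument carried out sub-routine by sub-routine and then composed --- but it is substantially more complete in exactly the places the paper is thinnest. The paper states and verifies invariants only for the main \textbf{for} loop of FDRM-CA (``each receiver processed so far has its best available donor, respecting the eligible donors' preference lists'') and for CTFU (``the first $i-1$ requests of $\mathbb{C}$ have been classified into the five lists''), and then dismisses both NUIR and CA-DTB with ``can be proved along similar lines.'' You instead make CA-DTB the centre of gravity, which is the right call: the main-routine invariant the paper asserts is really a consequence of your steps (ii)--(iv) for CA-DTB (feasible volunteer selection, well-defined total preorders after extraction and augmentation, unambiguous double tie-breaking via the totally ordered sequence numbers, and termination of the acceptance loop because each pass removes a donor), none of which the paper actually argues. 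Your explicit four-way reading of ``works correctly'' (completeness, feasibility, preference-consistency, termination) is also sharper than the paper's implicit single invariant, and your observation that the parallel composition of NUIR with the main loop requires an iteration-boundary invariant plus a fairness assumption --- so that only partial correctness can be claimed --- identifies a genuine gap that the paper's proof passes over in silence. In short, where the two arguments overlap they agree; where they differ, your version supplies the missing substance.
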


\begin{proof}[\textbf{Proof:}]
We prove this using the loop invariant technique \citep{cormen2009introduction}. We start our proof with the main FDRM-CA algorithm and then we go into the detailing of the inner sub-processes. FDRM-CA consists of three sub-processes, namely, NUIR, CTFU and the CA-DTB. In this proof, along with the main mechanism of FDRM-CA, the sub-processes are taken into account as well, so that, we can prove that as a whole, FDRM-CA works correctly.

\begin{itemize}
\item \textbf{The Main Routine of FDRM-CA.}\\

\noindent For this algorithm we use the following loop invariant:\\

    \begin{adjustwidth}{1cm}{1cm}
    At the start of the $i^{th}$ iteration of the \textbf{for} loop of lines 5-13, each receiver $\mathbb{R}_j$ processed in the prior $(i - 1)$ iterations have their best available donors allocated, respecting the eligible donors' preference lists.\\
    \end{adjustwidth}

\textbf{Initialization:} Prior to the first iteration of the loop, $i = 1$. There are no receivers processed and no matching exists. This trivially satisfies the invariant.

\textbf{Maintenance:} At any iteration $i$ of the loop, each receiver $\mathbb{R}_j$ processed is classified into either \textit{perishable} or \textit{non-perishable} category. Thereafter, for each donor in the priority list of $\mathbb{R}_j$, the donor's ranking of $\mathbb{R}_j$ is retrieved. The donor $\mathbb{D}_k$ with the highest ranking for $\mathbb{R}_j$ is assigned to $\mathbb{R}_j$. Furthermore, incriminating $i$ for the next iteration of the \textbf{for} loop maintains the invariant.

Of course, the correctness of the above step depends on the correctness of the inner sub-parts which is what has been proven later in this section.

\textbf{Termination:} At termination, when $i = n + 1$, each receiver $\mathbb{R}_j$ processed in the prior $n$ iterations have their best available donors allocated, respecting the eligible donors' preference lists. This proves that the FDRM-CA algorithm works correctly. To reinforce the maintenance step of this proof, following are the additional proofs of correctness of the sub-processes of this algorithm.\\

\item \textbf{The NUIR Sub-Process.}\\

\noindent This can be proved along similar lines as the above.\\

\item \textbf{The CTFU Sub-Process.}\\

\noindent For this algorithm we use the following loop invariant:\\

    \begin{adjustwidth}{1cm}{1cm}
    At the start of the $i^{th}$ iteration of the \textbf{foreach} loop of lines 1-13, the first $(i - 1)$ agent requests from $\mathbb{C}$, processed in the prior $(i - 1)$ iterations, have been classified and appended to one of the five lists, namely, $\mathbb{V}, \mathbb{PFD}, \mathbb{NPFD}, \mathbb{PFR}, \mathbb{NPFR}$.\\
    \end{adjustwidth}

\textbf{Initialization:} Prior to the first iteration of the \textbf{foreach} loop, there are no agent requests processed, and this trivially satisfies the invariant.

\textbf{Maintenance:} In the $i^{th}$ iteration, the $i^{th}$ agent request is taken up for processing from the list $\mathbb{C}$. If this agent request is a volunteer request, it is appended to the list $\mathbb{V}$. If not, it is classified and appended to one of the four food request lists ($\mathbb{PFD}, \mathbb{NPFD}, \mathbb{PFR}, \mathbb{NPFR}$) as per \hyperref[Tab-2]{\textbf{Table 2. Bifurcation of Food Requests}}. Observe that, prior to this iteration, all the five lists, namely, $\mathbb{V}, \mathbb{PFD}, \mathbb{NPFD}, \mathbb{PFR}, \mathbb{NPFR}$, already had a total of $(i - 1)$ agent requests from $\mathbb{C}$ classified and appended to them from the $(i - 1)$ prior iterations. As the \textbf{foreach} loop picks up the next agent request from $\mathbb{C}$ for the next iteration, the loop invariant is reestablished.

\textbf{Termination:} At termination, the list $\mathbb{C}$ is empty and all agent requests have been classified and appended to one of the five lists, namely, $\mathbb{V}, \mathbb{PFD}, \mathbb{NPFD}, \mathbb{PFR}, \mathbb{NPFR}$. Thus, the invariant is maintained. This proves that the CTFU process works correctly.\\

\item \textbf{The CA-DTB Sub-Process.}\\

\noindent This can be proved along similar lines as the above.
\end{itemize}

By having proven the correctness of all its sub-parts, this concludes our proof of correctness of the FDRM-CA mechanism.
\end{proof}

\begin{lemma}
FDRM-CA is strategyproof for donors and receivers.
\end{lemma}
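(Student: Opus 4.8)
The plan is to localize the only stage of FDRM-CA where reported preferences influence the outcome---the CA-DTB subroutine---and to treat each side of the donor--receiver market separately. I take an agent's private information to be its ``preferred counterparts'' list, holding the verifiable fields (location, food type, amount, and all timing windows) fixed, and I take an agent to be better off when it is matched rather than unmatched and, if matched, to a more-preferred counterpart. I would fix an arbitrary instance, an agent $\mathbb{A}$, and an arbitrary unilateral deviation by $\mathbb{A}$, and compare $\mathbb{A}$'s realized partner under truthful reporting versus under the deviation. From the correctness lemma I would reuse the skeleton of CA-DTB: receivers are processed in a fixed \emph{earliest-requirement-end-time-first} order that depends only on timing; when a receiver is processed she is handed the still-available, chronologically-eligible donor whose augmented ranking of her is best; and surviving ties are broken by the server's request-arrival sequence numbers, which form a total order independent of every report.

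\textbf{Receivers.} I would establish strategyproofness directly by showing that a receiver's reported ranking of donors is payoff-irrelevant. Her place in the processing order is pinned down by timing, not by her ranking; and, by the extraction-and-augmentation step together with the assignment rule, the donor she receives is the chronologically-eligible available donor who ranks \emph{her} highest, with the arrival-order tie-break---a choice that never consults the receiver's own ordering of donors, since the augmentation only permutes entries the assignment rule then ignores. Hence her match, and her utility, are invariant under any misreport, so no profitable deviation exists.

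\textbf{Donors.} Here a donor's reported ranking over receivers does drive the outcome: donor $\mathbb{D}_k$ is claimed by the first receiver, in the fixed processing order and among those for whom he is eligible, whose position in his list beats that of every other then-available eligible donor. I would cast this donor-to-receiver resolution as a deferred-acceptance-style matching in which the donors occupy the proposing side, and then invoke the classical strategyproofness of the proposing side of Gale \& Shapley deferred acceptance (the Dubins--Freedman/Roth theorem; see \citep{gale2013college, roughgarden2016cs269i}) to conclude that truthful ranking is a dominant strategy for every donor. Combining the two halves yields the lemma.

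\textbf{Main obstacle.} I expect the donor side to be the crux: one must verify that CA-DTB's greedy, no-backtracking assignment---with a donor permanently removed once matched, with multi-meal donations handled as several unit-capacity requests, and with receiver-side timing windows restricting eligibility---is genuinely outcome-equivalent to a donor-proposing deferred acceptance, so that the classical manipulation results apply and are not evaded through the absence of rejection chains. A secondary point is to make the receiver-side ``report-irrelevance'' argument robust to the reading in which a receiver's explicitly listed donors are tried before unlisted ones; under that reading one must instead show that, because a receiver is ultimately eligible for every chronologically-available donor regardless of her list, pruning her list can only shrink the set of donors competing to serve her first and so can never improve her realized partner.
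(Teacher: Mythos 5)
Your donor-side argument has a genuine gap. CA-DTB is not outcome-equivalent to donor-proposing deferred acceptance: there are no tentative matches and no rejection chains. Receivers are processed once, in earliest-requirement-end-time order; each is permanently assigned the still-available eligible donor who ranks her highest; and that donor is then removed from the market (the step $\mathbb{D}-\mathbb{D}_j$). Consequently a donor is claimed by the \emph{first-processed} receiver for whom he happens to be the top-ranking eligible donor, not by his most-preferred achievable receiver, so the Dubins--Freedman/Roth theorem --- a statement about the proposer-optimal \emph{stable} matching --- cannot be invoked. You flag exactly this as the ``main obstacle,'' but it is not a detail to be checked later; it is where the reduction breaks. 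Under a literal reading of the assignment rule one can even construct instances in which a donor who demotes a receiver in his reported list causes a rival donor to claim her, leaving the deviator free for a later-processed receiver he truly prefers; any correct proof therefore has to engage with the greedy, no-backtracking structure directly rather than import a stable-matching theorem.

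The paper's own proof takes a different and much more elementary route: it argues that CA-DTB hands each donor the best available receiver according to the donor's \emph{reported} list, so a misreported list can only return an option that is weakly worse under the true list, and it asserts that availability is a run-time phenomenon unaffected by the report; it then separately disposes of the other reportable fields (location, food type, amount, and pickup timing, the last of which feeds the tie-breaking), leaning on ICT verification of factual data and on the observation that a misreported pickup time does not guarantee volunteer availability at the true one. Your proposal holds all of these fields fixed as ``verifiable,'' so even with the preference-list half repaired you would be proving a narrower statement than the paper's. Your receiver-side report-irrelevance observation is, by contrast, more concrete than the paper's one-line ``along similar lines,'' but it hinges on a reading of the donor-selection step under which the receiver's ordering is never consulted --- a reading that conflicts with the paper's own prose (``chronologically available donors present in this priority list get preferred over those absent'') --- so you would need to commit to one interpretation of CA-DTB before that half can stand.
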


\begin{proof}[\textbf{Proof:}]
While there can be many places in FDRM-CA for donors to misreport, however, they can never gain from the same.

Firstly, say a donor $\mathbb{D}_i$ misreports the donor's preference list as $\mathbb{R}_k \succ \mathbb{R}_j$. Since the CA-DTB method awards the best available option from this preference list, therefore, misreporting only gets the donor $\mathbb{D}_i$ an option $\mathbb{R}_k$ which might not be the best available as per the donor's truthful preference list. The only way a donor $\mathbb{D}_i$ can gain is if $\mathbb{R}_k$ becomes unavailable in the \textit{extracted and updated} preference list of $\mathbb{D}_i$. However, that is a run-time phenomenon not affected by the preference list of the donor, and hence, not affected by manipulation. The above line of logic can be extended to other input details of the donors like location and type of food. Also, since an ICT based platform is used for tracking donations, it eliminates misreporting of factual data from the donors' end. Note that misreporting the amount of food donated by $\mathbb{D}_i$ does not affect the the order in which receivers are chosen for matching, it only changes the count of receivers assigned to the donor $\mathbb{D}_i$. Secondly, donors cannot influence the choice of their neighbourhood as it depends on the run-time availability of volunteers ($\mathbb{V}$) and the details of the transport used. Thirdly, ties are broken primarily in favour of the earliest food donation or requirement timings which can be misreported. Consider a donor $\mathbb{D}_i$ who misreported food pickup time to a time ($t_a$) before the food is available for pickup ($t_b$; $t_a<t_b$). Since, food pickup is a volunteer mediated activity and volunteers only have a window of time available, therefore, the availability of a volunteer $\mathbb{V}_j$ at a misreported time ($t_a$) does not guarantee the donor the availability of any volunteer at the donor's truthful pickup time ($t_b$). Worst case scenario, no volunteers may be available at the truthful pickup time ($t_b$). This concludes the proof.

It can be proved, along similar lines, that FDRM-CA is strategyproof for receivers as well.
\end{proof}

\begin{lemma}
FDRM-CA is Pareto-optimal for donors.
\end{lemma}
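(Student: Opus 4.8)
The plan is to argue by contradiction, in the spirit of the deferred-acceptance reasoning the mechanism is built on. Suppose $\mathbb{M}$ is the matching returned by FDRM-CA (concretely, by the CA-DTB sub-process) and that some feasible matching $\mathbb{M}'$ --- respecting exactly the same run-time eligibility data (location and timing windows, the realised volunteer assignments and the $Vicinity$ values they induce, the meal-size splitting, and all capacities) --- Pareto-dominates $\mathbb{M}$ for donors: every donor weakly prefers its $\mathbb{M}'$-assignment to its $\mathbb{M}$-assignment under its true preference list, and at least one donor strictly so. Because the augmentation-induced ties are resolved by the request-submission sequence numbers, which form a total order on the agents (as stressed in the description of CA-DTB), every donor's working order over the receivers it can reach is strict; hence $\mathbb{M}$ and $\mathbb{M}'$ genuinely differ on the donor side.

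Next I would form the symmetric difference of $\mathbb{M}$ and $\mathbb{M}'$, viewed as assignments of donors to receiver-slots (a receiver needing $c$ meals contributes $c$ slots, filled one at a time inside the \textbf{while} loop of CA-DTB). Since no donor is worse off, every donor either keeps its slot or moves to a strictly more preferred one, so the symmetric difference decomposes into vertex-disjoint alternating rotation cycles $\mathbb{D}_1 \to \rho_1 \to \mathbb{D}_2 \to \cdots \to \mathbb{D}_k \to \rho_k \to \mathbb{D}_1$ (with $\mathbb{D}_i$ holding $\rho_{i-1}$ in $\mathbb{M}$, $\rho_i$ in $\mathbb{M}'$, and $\rho_i \succ_{\mathbb{D}_i} \rho_{i-1}$) together with alternating augmenting paths that begin at a donor unmatched in $\mathbb{M}$. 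At least one such component contains a strictly improving donor, and it suffices to refute the existence of any one of them.

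The core of the argument is a local analysis at each slot of such a component. Fix a rotation and look at the instant CA-DTB filled $\rho_i$: it awarded $\rho_i$ to $\mathbb{D}_{i+1}$ because, among the donors eligible for $\rho_i$'s receiver and still available at that instant, $\mathbb{D}_{i+1}$ ranked that receiver highest (ties to the earliest submitter). Since $\mathbb{D}_i$ is eligible for $\rho_i$ --- it holds $\rho_i$ in $\mathbb{M}'$ --- either $\mathbb{D}_i$ was already matched when $\rho_i$ was processed, which forces $\rho_{i-1}$ (the slot $\mathbb{D}_i$ holds in $\mathbb{M}$) to be processed strictly earlier than $\rho_i$, or $\mathbb{D}_i$ was still available, in which case $\mathbb{D}_{i+1}$ must rank $\rho_i$'s receiver at least as high as $\mathbb{D}_i$ does (earlier submitter if tied) and $\rho_{i-1}$ is processed strictly later than $\rho_i$. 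Chaining the resulting $k$ strict relations around the cycle, I would aim to contradict the fact that the receiver-slots are processed in one fixed total order (requirement-end-time first, then submission sequence). For augmenting paths the terminal donor is unmatched in $\mathbb{M}$ --- hence always available --- so the slot it occupies in $\mathbb{M}'$ was either left unfilled by CA-DTB (impossible, an available eligible donor existed) or handed to a strictly more enthusiastic competitor, again feeding the chain of processing-order inequalities.

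I expect the genuine obstacle to be exactly the rotations that mix the two local cases. If every step is of the first kind one obtains an immediate cyclic inequality $\rho_0 < \rho_1 < \cdots < \rho_0$, and likewise if every step is of the second kind; but a rotation that alternates between the two kinds need not close on processing order alone, so one must also bring in the preference inequalities $\rho_i \succ_{\mathbb{D}_i} \rho_{i-1}$ together with the enthusiasm and submission-time data from the second-kind steps to eliminate it. Making this combinatorial book-keeping airtight --- in particular handling multi-meal receivers, the ``the last donation may slightly overshoot the requirement'' rule, and the interaction between the chronological processing order and the donor-enthusiasm ordering --- is where the real work lies. If it cannot be pushed through in full generality, the natural fallback is to strengthen CA-DTB with a mutual-consent check (a receiver claims its most enthusiastic available donor only among those that also rank it above their current tentative assignment) and establish Pareto-optimality for donors for that variant instead.
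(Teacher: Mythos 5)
Your approach is genuinely different from the paper's. The paper argues by induction on iterations against a hypothetical Pareto-optimal algorithm OTH, resting on the assertion that CA-DTB ``assigns the best available option for donors as per their respective preference lists'' and that any deviation from this greedy choice destroys Pareto-optimality; neither claim is justified there. Your symmetric-difference decomposition into rotations and augmenting paths is the standard rigorous route for this kind of statement, and you have correctly located the exact point where it fails to close: rotations mixing the two local cases do not produce a cyclic contradiction from the processing order alone.

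That obstacle is not book-keeping --- it is fatal, because the lemma is false for the mechanism as actually specified. CA-DTB processes receivers in a fixed order (earliest requirement end time first) and hands each receiver to the still-available eligible donor who ranks \emph{that receiver} highest (Table 5 and lines 26--31); it is a serial dictatorship over receivers with donor enthusiasm as the selection rule, not a mechanism that gives each donor its best available receiver. Concretely, let $\mathbb{R}_a$ be processed before $\mathbb{R}_b$, let $\mathbb{D}_1$ rank $\mathbb{R}_b$ first and $\mathbb{R}_a$ second, and let $\mathbb{D}_2$ rank $\mathbb{R}_a$ third and $\mathbb{R}_b$ fourth, its two higher-ranked receivers being ineligible at run time. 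CA-DTB awards $\mathbb{R}_a$ to $\mathbb{D}_1$ (rank $2<3$) and then $\mathbb{R}_b$ to $\mathbb{D}_2$; swapping the two assignments makes both donors strictly better off. So no amount of care with multi-meal receivers or tie-breaking will complete your chaining argument, and the paper's induction step is simply asserting something untrue of CA-DTB. Your fallback is the correct conclusion: the lemma holds only for a modified mechanism (a mutual-consent/donor-proposing variant, or a serial dictatorship over donors), and the proof should be written for that variant.
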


\begin{proof}[\textbf{Proof:}]
This we prove by contradiction and mathematical induction. Let us assume that FDRM-CA is not Pareto-optimal. Also, let there be some other algorithm, say OTH, that assigns Pareto-optimally.

The basis condition for induction is trivial for zero iterations. For the induction step, say until an iteration $i$, both algorithms generate a similar matching $\mathbb{M}_i$. In the $(i + 1)^{th}$ iteration, the CA-DTB sub-process of the FDRM-CA mechanism assigns the best available option for donors as per their respective preference lists and updates the matching to $\mathbb{M}_{i+1}$. If OTH assigns anything other than the above, then it does not generate a Pareto-optimal matching ($\mathbb{M}_{i+1}$) at this iteration. Therefore, by the Principle of Mathematical Induction, the final matching generated by OTH will also not be optimal. This is a direct contradiction to our initial assumption. Thus, we can say that our initial assumption was not correct and that FDRM-CA is Pareto-optimal for donors. This concludes the proof.
\end{proof}

\begin{lemma}
FDRM-CA produces results in real (polynomial) time.
\end{lemma}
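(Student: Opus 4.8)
The plan is to bound the running time of a single matching round of FDRM-CA --- one pass through the body of the \textbf{for} loop of \hyperref[Algo-1]{\textbf{Algorithm 1}} --- by a polynomial in the number of active donor, receiver and volunteer requests ($d$, $r$, $v$), and then to note that the outer loop count $n \to 2^{64}$ is a fixed constant and that the parallel NUIR routine adds only amortized-linear overhead, so the total time is polynomial as well. I would carry out the analysis sub-process by sub-process, in the order they are invoked.

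First I would observe that NUIR is interrupt-driven: each invocation handles one interrupt. On a $NewUserRequestInterrupt$ it performs $SplitIntoMeals(T_m)$ and appends the resulting chunks to $\mathbb{C}$; on a $MappingRejectionInterrupt$ it re-queues the (at most three) agents of the rejected match. Hence each interrupt costs $O(1)$ plus the number of meal-chunks created, and over a round the extra work is $O(|\mathbb{C}|)$. Next, CTFU is a single \textbf{foreach} over $\mathbb{C}$ performing an $O(1)$ classification per request, so it runs in $O(|\mathbb{C}|) = O(d+r+v)$ time (the loop invariant proved above for correctness already tells us exactly one list-insertion happens per request). The bulk of the cost is CA-DTB, which is called twice per round (perishables, then non-perishables). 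Selecting the current sets $\mathbb{D}'$ and $\mathbb{R}'$ is $O(d+r)$; for each donor the volunteer shortlist $\mathbb{V}'$ is built by one scan over $\mathbb{V}$ with $O(1)$ temporal and geometric tests per volunteer, and the $Vicinity$-maximizing volunteer is chosen in the same scan, for $O(dv)$ in total; the preference extraction-and-augmentation steps (lines 23 and 25) touch each agent once and each preference-list entry once, costing $O((d+r)\max(d,r))$ in the worst case; and the receiver-matching \textbf{while} loop removes one donor from the market per meal assigned, so it iterates $O(d)$ times overall, each iteration doing an $O(\max(d,r))$ search for the highest-priority eligible donor, with ties broken in $O(1)$ by the total-order sequence numbers. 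Summing, one CA-DTB call is $O(dv + (d+r)\max(d,r))$, and one round of FDRM-CA --- CTFU plus two CA-DTB calls plus $O(|\mathbb{M}|) = O(d+r)$ work to dispatch notifications --- is polynomial. Since the outer loop runs a fixed number $n \le 2^{64}$ of rounds, independent of the instance, the cumulative cost is a constant multiple of the per-round bound and hence polynomial; equivalently, for online deployment the operative claim is that each round's matching is produced in time polynomial in the number of requests outstanding that round, which is exactly what the above shows.

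The main obstacle I anticipate is the meal-splitting step: $SplitIntoMeals(T_m)$ turns a donor request of weight $w$ into about $\lceil w / T_m \rceil$ virtual requests, so the ``effective'' value of $d$ can be proportional to the total offered weight divided by the constant $T_m$ --- polynomial in the numeric magnitude of the input but only pseudo-polynomial in its bit-length. I would address this by fixing $T_m$ as a system constant and treating the total offered food weight as part of the input size (which is reasonable here, since the displayed output is re-clubbed per identical donor--receiver--volunteer triple and therefore stays compact), or, more carefully, by carrying meal multiplicities in the internal representation so that the $O(dv)$ and $O((d+r)\max(d,r))$ bounds can be stated in terms of the number of \emph{distinct} requests. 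A secondary, cosmetic issue is that the literal loop bounds ($n \to 2^{64}$ in \hyperref[Algo-1]{\textbf{Algorithm 1}} and the non-terminating parallel loop in NUIR) must be read as ``iterate until all requests are served / forever service interrupts''; the honest and provable statement is the per-round polynomial bound together with constant round-overhead, and that is the form I would write up.
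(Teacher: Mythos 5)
Your proposal is correct and follows essentially the same route as the paper: a per-iteration, sub-process-by-sub-process accounting in which NUIR is dismissed as parallel overhead, CTFU is linear in $d+r+v$, and CA-DTB dominates with an $O(dv)$ volunteer-assignment pass plus a preference-update-and-matching pass bounded by a low-degree polynomial (the paper caps everything at $O(r^3)$ using the assumption $d<r$). Your caveat that meal-splitting makes the effective $d$ pseudo-polynomial in the encoded input is a genuine refinement the paper does not address, but it does not change the structure of the argument.
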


\begin{proof}[\textbf{Proof:}]
We only analyze the time complexity of a single iteration of the FDRM-CA mechanism since after each iteration agents get their assignments. Since food is only recoverable inside a time window, we will focus our analysis from the donors' point of view. For this, we analyze the time complexity of each sub-process of the mechanism. Note that receivers with large requirements may get their complete matching (that is equal to their total requirement) after several iterations of the mechanism.
    \begin{itemize}
        \item The NUIR sub-process is parallel to the main process and thus does not have any impact on the time complexity as it does not need to finish for the main process to execute,
        \item The CTFU sub-process takes O$(d+r+v)<$ O$(r^2)$,
        \item The CA-DTB sub-process can be viewed as having two sub-parts:
        \begin{itemize}
            \item The sub-part to get the neighbourhood of each donor and volunteer assignment to the donor takes O$(dv)$ $<$ O$(r^3)$,
            \item The update of preference and match generation sub-part has a time complexity of O$(d^2r)$ $<$ O$(r^3)$,
        \end{itemize}
        since $r>d$ for most practical situations, especially in developing countries like India, at the time of writing this paper,
    \end{itemize}
where $d$, $r$, and $v$ are the numbers of donors, receivers, and volunteers respectively, per iteration of the FDRM-CA mechanism. Therefore, the effective time complexity of each iteration of the FDRM-CA mechanism is capped at O$(r^3)$ which is polynomial (real) time. This concludes the proof.
\end{proof}

\begin{lemma}
Using an off routing threshold $T_l\%$, the maximum total off-routing percentage $(\Gamma)$ for any volunteer for each meal transportation will always be less than equal to four times this threshold percent of the route distance of the corresponding volunteer, i.e., $\Gamma \leq 4 \times T_l$.
\end{lemma}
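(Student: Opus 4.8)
The plan is to reduce the statement to a short planar triangle-inequality estimate, along the lines of the remark anticipated in Section~\ref{ProbForm}. Fix the volunteer $\mathbb{V}_m$ transporting a single meal, write $\rho := \mathbb{V}_m.Route.Distance$ for its nominal route length, and model that route, for distance-estimation purposes, as a shortest path from its start point $S$ to its destination $E$, so $|SE| \le \rho$. For one meal, $\mathbb{V}_m$ must additionally visit the donor pickup location $\mathbb{D}$ and the receiver drop location $\mathbb{R}$, so the executed trip is $S \to \mathbb{D} \to \mathbb{R} \to E$ and the off-routing distance is $\Gamma_{\mathrm{dist}} := |S\mathbb{D}| + |\mathbb{D}\mathbb{R}| + |\mathbb{R}E| - \rho$; the claimed bound $\Gamma \le 4 \times T_l$ is exactly $\Gamma_{\mathrm{dist}} \le 4\,T_l\% \times \rho$ after dividing through by $\rho$.

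First I would read off the two eligibility constraints that CA-DTB already enforces on this triplet. The volunteer is shortlisted for $\mathbb{D}$ only if $\mathbb{D}$ lies within radius $T_l\% \times \rho$ of the route start, so $|S\mathbb{D}| \le T_l\% \times \rho$. The receiver is matched only if it lies within perpendicular distance $T_l\% \times \rho$ of the route; letting $\mathbb{R}'$ be the foot of that perpendicular on the segment $SE$, this gives both $|\mathbb{R}\mathbb{R}'| \le T_l\% \times \rho$ and, crucially, $|S\mathbb{R}'| + |\mathbb{R}'E| = |SE| \le \rho$.

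Then I would bound the two remaining legs by the triangle inequality, routing them through $S$ and through $\mathbb{R}'$: $|\mathbb{D}\mathbb{R}| \le |\mathbb{D}S| + |S\mathbb{R}'| + |\mathbb{R}'\mathbb{R}|$ and $|\mathbb{R}E| \le |\mathbb{R}\mathbb{R}'| + |\mathbb{R}'E|$. Summing the estimates for all three legs, the terms $|S\mathbb{R}'| + |\mathbb{R}'E|$ collapse to at most $\rho$ and cancel against the $-\rho$ in $\Gamma_{\mathrm{dist}}$, while four separate copies of $T_l\% \times \rho$ survive (one from $|S\mathbb{D}|$, two from the split of $|\mathbb{D}\mathbb{R}|$, one from the split of $|\mathbb{R}E|$), giving $\Gamma_{\mathrm{dist}} \le 4\,T_l\% \times \rho$ and hence $\Gamma \le 4 \times T_l$. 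The intuition worth recording alongside the computation is that each of the two off-route stops costs at most one out-and-back of $2\,T_l\% \times \rho$.

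The main obstacle is not the algebra but committing to a faithful geometric model: a real route is a road-network path rather than a segment, and one has to justify that treating it as a shortest path (so $|SE| \le \rho$) only weakens the right-hand side, and that ``off-routing percentage'' is being measured against $\rho$. One should also check that the bound is independent of the order in which $\mathbb{D}$ and $\mathbb{R}'$ occur along the route, since only the identity $|S\mathbb{R}'| + |\mathbb{R}'E| = |SE|$ is used, and that the degenerate cases ($\mathbb{D} = S$, $\mathbb{R}$ lying on the route, etc.) are subsumed by the same inequalities. Once these modelling points are pinned down, the inequality chain is routine.
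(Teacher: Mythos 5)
Your proof is correct and takes essentially the same approach as the paper's: each of the two off-route stops (donor pickup and receiver drop-off) costs at most one out-and-back detour of $2 \times T_l\%$ of the route distance, giving $\Gamma \leq 4 \times T_l$. If anything, your triangle-inequality formalization is more rigorous than the paper's version, which only examines a single asserted ``worst case'' configuration (donor and receiver each at maximal perpendicular offset) rather than establishing the bound for all admissible positions as your argument does.
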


\begin{proof}[\textbf{Proof:}]
Analyzing a worst case scenario, let the donor $\mathbb{D}_i$'s location be opposite to the volunteer $\mathbb{V}_j$'s route, and at a maximum possible distance of $T_l\%$ of the volunteer's travel distance. Therefore, the donation pickup will contribute to an extra travel distance of $2 \times T_l\%$ for the volunteer. At the drop-off end, let us again assume a worst case scenario by allowing the receiver $\mathbb{R}_k$ to be at a 180\textdegree ~off-route location yielding another $2 \times T_l\%$ travel overhead for the volunteer $\mathbb{V}_j$. Thus, even in the worst case scenario, the volunteer $\mathbb{V}_j$ has to do a maximum total off-routing percentage given by the inequality $\Gamma \leq 4 \times T_l$. This concludes the proof.
\end{proof}

\subsection{Algorithm Simulation}

In this section we produce the simulation results of the algorithm and analyze them.

\subsubsection{Data}

We generate random, logically coherent data for $5000$ agent requests for the simulation. We have chosen to model our operational city to be $50 \times 50$ kilometers across, our working hours to be from 06:00 hours to 23:59 hours, any volunteer's maximum payload capacity to be 100 kilograms. We have taken our threshold values as: $T_o=0.25$ hour, $T_d=2$ hours, $T_r=3$ hours, $T_w=0.25$ hour, $T_l=5\%$, $T_a=20\%$, $T_m=1$ kilogram, $T_P^m=20$ kilometers, $T_P^{nm}=5$ kilometers, and $T_{NP}=100$ kilometers.

\subsubsection{Results}

The simulation results have been presented in \hyperref[Fig-8]{\textbf{Figure 8. Simulation Results}}. In all the graphs, volunteer numbers have been expressed in terms of multiples of donation numbers ($\times Donors$). From \hyperref[Fig-8a]{\textbf{Figure 8(a). Agent Allocation vs Volunteer Availability}}, it is evident that the allocation percentage of agents initially increases rapidly with increase of volunteer availability and then plateaus as the availability grows further. \hyperref[Fig-8b]{\textbf{Figure 8(b). Start vs End Sorting for Receivers}} supports our initial claim that volunteers going for the receiver having the earliest requirement end time first will be able to address more requests. Similarly, \hyperref[Fig-8c]{\textbf{Figure 8(c). Original vs Eligible Preferences for Agents}} depicts the importance of updating the submitted agent preferences to reflect run-time temporal and spatial availability of agents. At last, \hyperref[Fig-8d]{\textbf{Figure 8(d). Results of Manipulation of Agent Preferences}} establishes that manipulation of preferences will not help agents to gain a better allocation other than when a small percentage of them do not have their false higher preferred donors unavailable while calculating their eligibility lists. In the first graph of \hyperref[Fig-8a]{\textbf{Figure 8(a). Agent Allocation vs Volunteer Availability}}, actual spacing between volunteer values have been maintained for volunteer availability to reveal the true nature of the allocation curve. \hyperref[Fig-8b]{\textbf{Figure 8(b). Start vs End Sorting for Receivers}} uses agent preferences updated with their eligibility list, and \hyperref[Fig-8c]{\textbf{Figure 8(c). Original vs Eligible Preferences for Agents}} uses end sorting for receivers. \hyperref[Fig-8d]{\textbf{Figure 8(d). Results of Manipulation of Agent Preferences}} has both end sorting and updated preferences running in the background.

\begin{figure}
    \centering
    \subfloat[Allocation vs Volunteer Availability]
        {\includegraphics[width=0.48\textwidth, height=4.2cm]{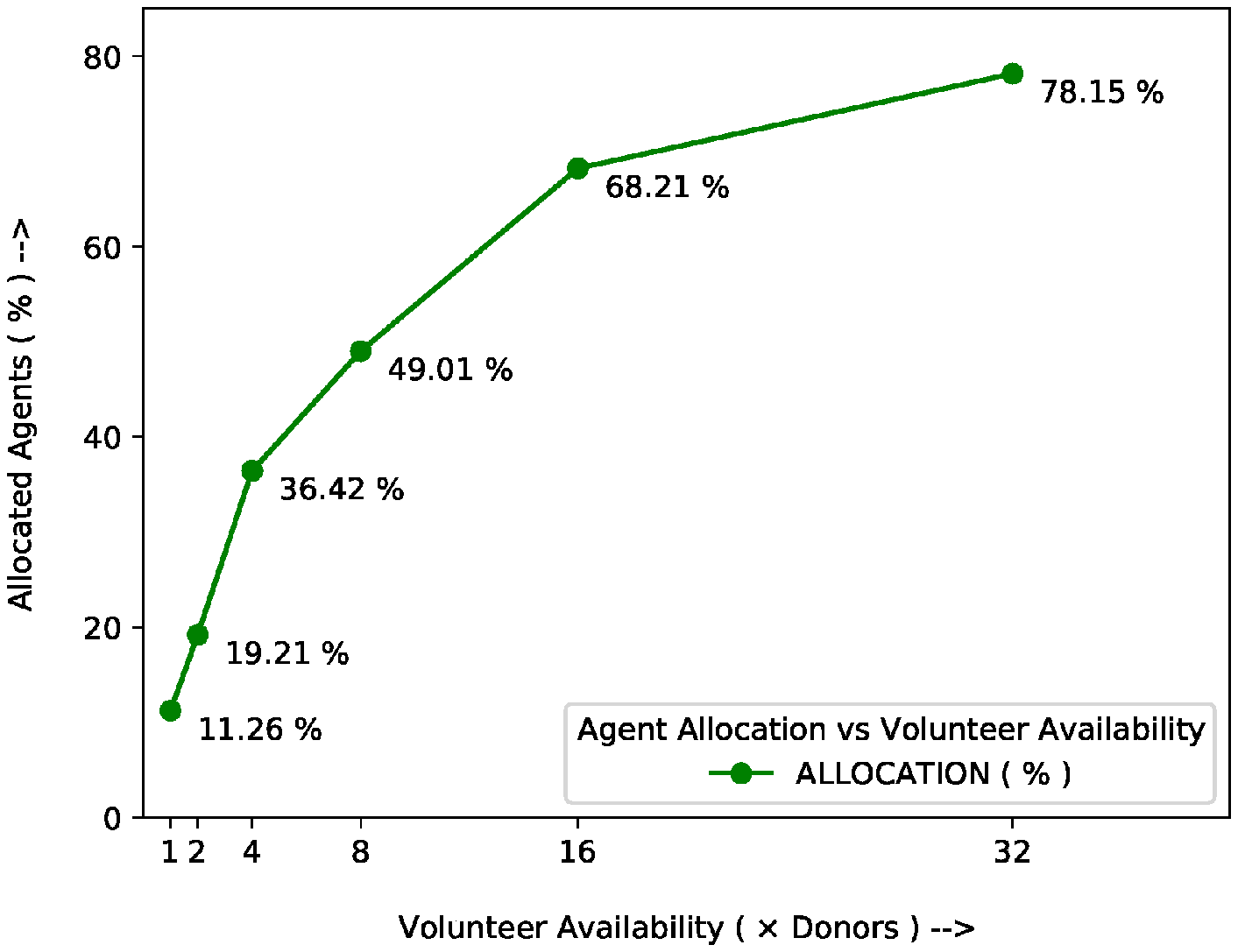}
        \label{Fig-8a}}
    \subfloat[Start vs End Sorting for Receivers]
        {\includegraphics[width=0.48\textwidth, height=4.2cm]{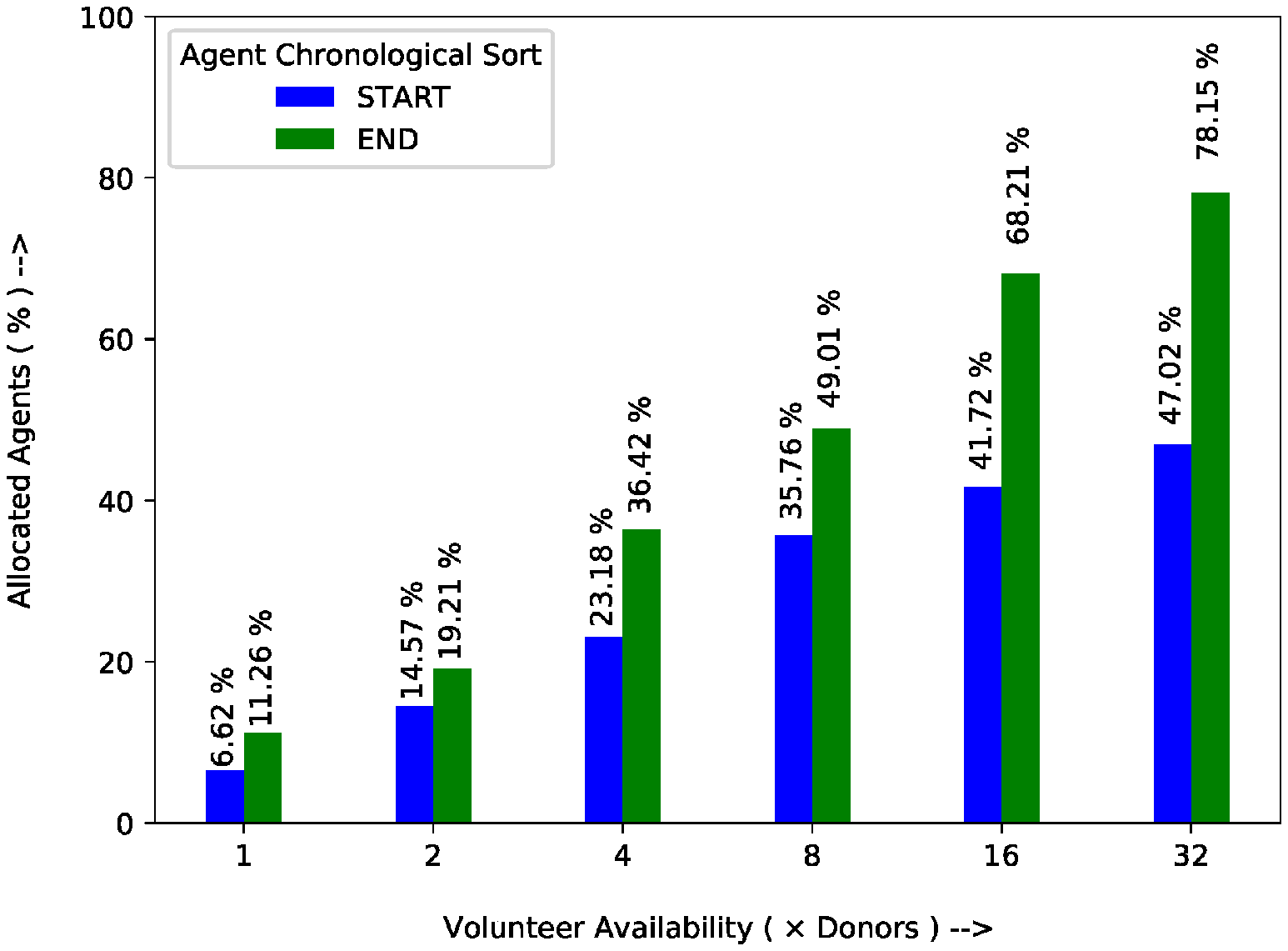}
        \label{Fig-8b}}
        \newline
    \subfloat[Original vs Eligible Preferences for Agents]
        {\includegraphics[width=0.48\textwidth, height=4.2cm]{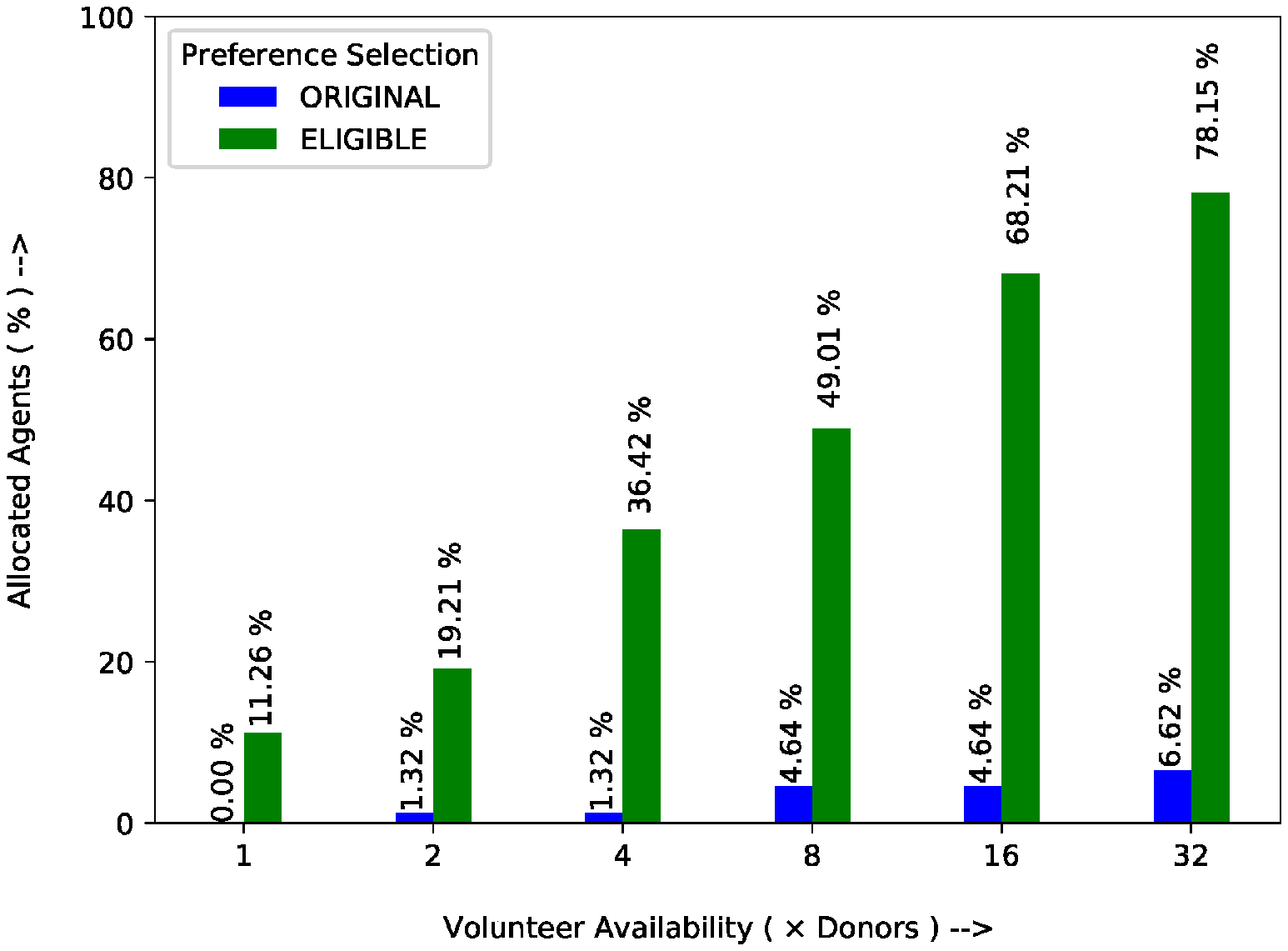}
        \label{Fig-8c}}
    \subfloat[Agent Preference Manipulation Results]
        {\includegraphics[width=0.48\textwidth, height=4.2cm]{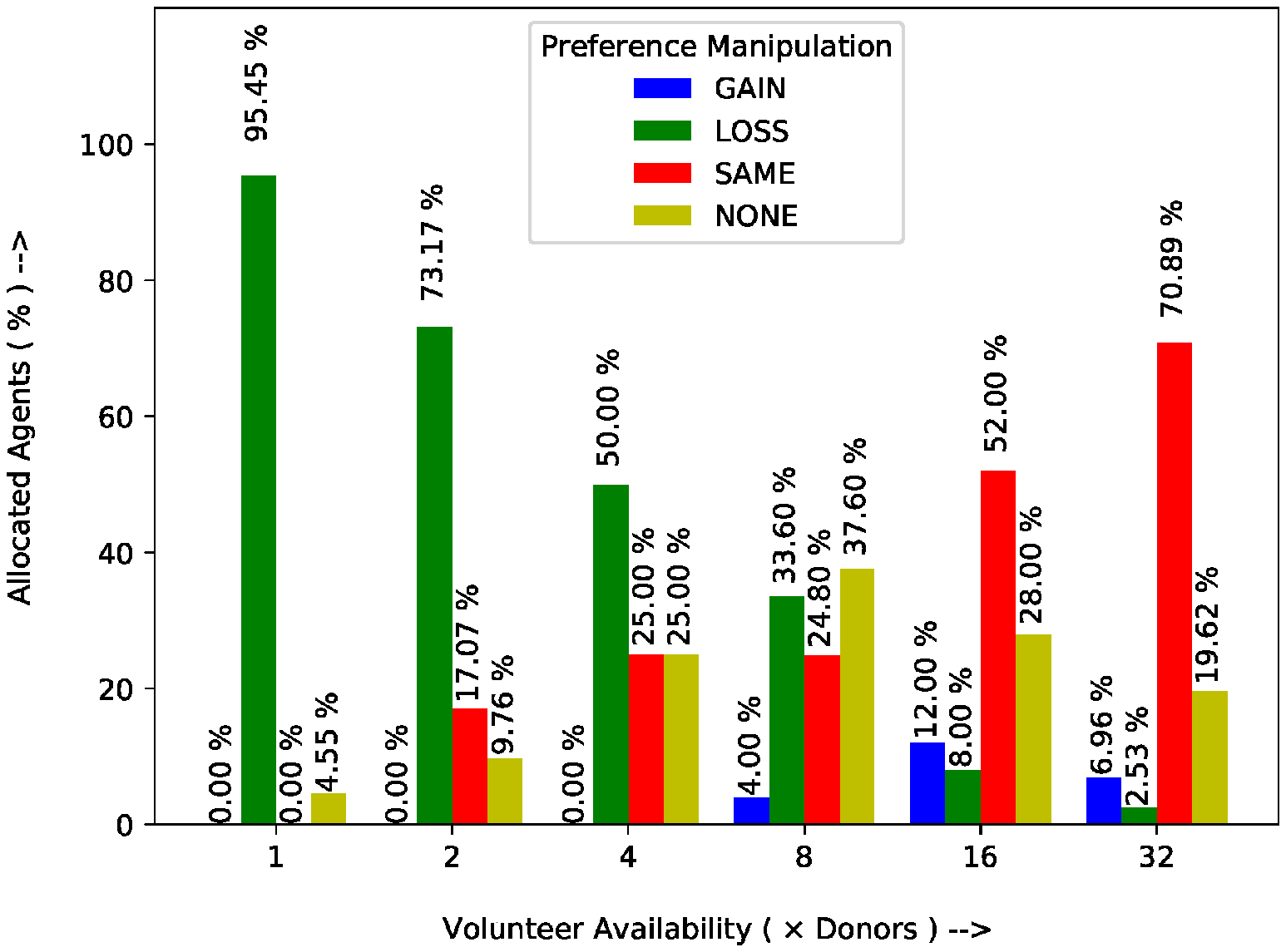}
        \label{Fig-8d}}
    \caption{Simulation Results}
    \label{Fig-8}
\end{figure}

\section{Conclusion and Future Work} \label{Sec-7}

\subsection{Conclusion}

Food redistribution is visibly not a long term solution for food wastage, and it has to be tackled at the roots by putting a check on the inclination of the society towards a permanent over-supply of food, and a fear of ever running-out of food. However, until this society structure is fixed, ICT will continue to bridge the gap between the over-supplied and the under-provided. Although, 100\% of the food waste cannot be targeted by this approach, the 60\% avoidable waste will be well diverted from landfills to the under-provided people. The FDRM-CA mechanism, delivered over handheld device apps, provides a great platform for doing this specifically. It prevents donors from misreporting, addresses donor and receiver preferences properly, prioritizes food donation/requirement events chronologically, matches donors optimally with receivers and volunteers, and above all, does all these to provide agents with their respective assignments in real time, making it an attractive choice for the task. As receivers broaden the spectrum of their requirements from the more common \textit{freshly cooked} or \textit{packaged solid} food types towards the statistically rarer \textit{fresh produce} or \textit{frozen uncooked} food types, and as donors get diversified from households to farmers and businesspersons gradually, this redistribution approach will be able to target the different stages of food production wherein food gets wasted. With the use of the proposed mechanism, the preferences of all the parties can be addressed better, leading to a higher satisfaction of all the agents involved. This is guaranteed to attract more participation into the food redistribution network, eventually lowering the food wastage and hunger challenges of the world for a sustainable and scalable food future.

\subsection{Future Work}

In the future, if some volunteers find incentives as their primary motivation to participate in the system, the same can be integrated with the model for a hybrid functionality so as to attract a broader volunteer participation leading to an overall better surplus movement.

\end{document}